\newtheorem{theorem}{Theorem}[section]
\newtheorem{lemma}{Lemma}[section]
\newtheorem{proposition}{Proposition}[section]
\theoremstyle{Definition}
\newtheorem{definition}{Definition}[section]
\theoremstyle{remark}
\newtheorem{remark}[theorem]{Remark}
\numberwithin{equation}{section}
\begin{document}

\begin{flushleft}
 {\bf\Large { Quadratic-phase wave packet transform}}

\parindent=0mm \vspace{.2in}

{\bf{M. Younus Bhat$^{1},$  Aamir H. Dar$^{2}$, Didar Urynbassarova $^{3,4}$ and Altyn Urynbassarova$^{5, 6}$}}
\end{flushleft}

{{\it $^{1}$ Department of  Mathematical Sciences,  Islamic University of Science and Technology Awantipora, Pulwama, Jammu and Kashmir 192122, India. E-mail: $\text{ gyounusg@gmail.com}$}}

{{\it $^{2}$ Department of  Mathematical Sciences,  Islamic University of Science and Technology Awantipora, Pulwama, Jammu and Kashmir 192122, India. E-mail: $\text{ahdkul740@gmail.com}$}}
{{\it $^{3}$ School of Information and Electronics, Beijing Institute of Technology, Beijing 100081, China}}
{{\it $^{4}$  Beijing Key Laboratory of Fractional Signals and Systems, Beijing 100081, China
{{\it $^{5}$   Department of Technology and Ecology, School of Society, Technology and Ecology, Narxoz University, 
Almaty 050000, Kazakhstan

{{\it $^{6}$  Department of Information Security, Faculty of Information Technology, Eurasian National University, 
Nur-Sultan 010000, Kazakhstan

}}

\begin{quotation}
\noindent
{\footnotesize {\sc Abstract.} The quadratic phase Fourier transform(QPFT) has gained much popularity in recent years because of its applications in
image and signal processing. However, the QPFT is
inadequate for localizing the quadratic-phase spectrum which is
required in some applications. In this paper, the quadratic-phase wave packet transform (QP-WPT) is
proposed to address this problem, based on the wave packet transform (WPT) and QPFT. Firstly, we propose the  definition
of the QP-WPT and gave its  relation with windowed Fourier transform (WFT). Secondly, several notable inequalities and important properties of newly defined QP-WPT, such as  boundedness, reconstruction formula, Moyal's formula , Reproducing kernel are derived.
Finally, we formulate several classes of uncertainty inequalities such as Leib's uncertainty principle, logarithmic uncertainty inequality and  the Heisenberg uncertainty inequality.\\

{ Keywords:} Quadratic phase Fourier transform;  Quadratic phase wave packet transform; Energy conservation; Uncertainty inequality.\\

\noindent
\textit{2000 Mathematics subject classification: } 42B10; 42A05; 42A38; 44A20; 42C40.}
\end{quotation}
\section{ \bf Introduction}
\noindent

 The Fourier transform (FT) is an important tool in optical communication and signal processing \cite{x1}.However, owing to its global kernel the FT is
incapable of obtaining information about local properties of the signal. But, the actual signals are often  non-stationary or time-variable, so to overcome this problem, the short time Fourier transform (STFT) is employed  that uses a
time window of fixed length applied at regular intervals so that we can obtain a portion of the signal considered to be stationary \cite{x2}. The resulting time-varying spectral depiction  is critical for non-stationary signal analysis, but in
this case it comes at fixed spectral and temporal resolution. The wavelet analysis(WT) \cite{x3,x4}  provides an attractive and pinch-hitting tool to the STFT by using an optical multichannel correlator with a bank of WT filters, can provide a better illustration of the signal instead of the STFT. Nonetheless, in the high  frequency region WT has poor frequency resolution. 
To solve this defect the wave packet transform (WPT) was proposed by combining the merits of STFT and WT \cite{x5,x6}. WPT is a linear transform which  uses the Weyl operator and the wave packages.

In recent years, researchers have  successfully applied wave packet transform (WPT)  in the fields of  wireless communication, denoising,  and image compression \cite{x7}-\cite{x14}. Wave packet transform (WPT) is used widely in signal processing as it has some better morality than wavelet transform (WT) \cite{x15,x16}. Moreover, it can realize multilevel decomposition and analyze the high frequency decomposition that is not achieved in traditional discrete  WT. The frequency subbands of signal are selected via wave packet decomposition,
that improves the time-frequency resolution capability of the signal. However, the WPT is defined as the FT of the signal 
windowed with the wavelet, so the results obtained by WPT will not be optimal in dealing with chirp signals whose energy is not
well concentrated in FT domain.

 A superlative generalized version  of the Fourier transform(FT) called  quadratic-phase Fourier transform(QPFT)  has been introduced by  Castro et al\cite{4a,5a}. This novel transform  has overthrown all the the applicable signal processing tools as it provides a unified analysis of both  transient
and non-transient signals in an easy and insightful fashion. The  QPFT is actually  a generalization of several well known transforms like Fourier, fractional Fourier and linear canonical transforms whose kernel is in the exponential form. Due to its extra degrees of freedom, the quadratic phase
Fourier transform (QFT) has marked its importance in treatment of
problems demanding several controllable parameters arising in diverse branches of science and
engineering, including harmonic analysis, sampling, image
processing, and so on \cite{a}-\cite{xo2}.

Recently Prasad and  Sharma \cite{ak} introduced the quadratic phase Fourier wavelet transform (QPFWT),which is generalization of classical continuous
wavelet transform,\cite{x18}-\cite{x24}continuous fractional wavelet transform \cite{x25,x26,x27} as well as generalization of linear canonical
wavelet transform\cite{x27,x28}. QPFWT intertwine the advantages of the quadratic-phase Fourier and wavelet transforms into a novel integral transform which
assimilates their individual properties. 
However, the transform neither relies on the complete kernel of the QPFT nor exhibits any existing
convolution structure in the quadratic-phase Fourier transform. So Shah and Lone \cite{fs1} introduced Quadratic-phase wavelet in different approach which is completely reliant upon convolution associated with QPFT.

As one of the generalization of the classical WPT, the fractional WPT(Fr-WPT) and linear canonical wave packet transform (LC-WPT)  have been introduced to improve the performance in concentration
\cite{x34,x36,x37}. They have attained a much more attention of the  signal processing community and optics. But to the  best of our knowledge theory about quadratic phase wave packet transform (QP-WPT) have never been proposed up to date, therefore it is worthwhile to study the
theory of QP-WPT based on the wave packet transform (WPT) and QPFT which can be productive for signal processing theory and applications. Therefore, the cynosure of this paper  is to rigorously study the quadratic phase wave packet transform (QP-WPT).\\

The highlights of the paper are pointed out below:
\begin{itemize}
\item To introduce a novel integral transform coined as the quadratic phase wave packet transform.\\\

\item To establish  relationship between  quadratic phase wave packet transform with   Fourier transform(FT) and windowed Fourier transform (WFT).\\\

\item To study several notable inequalities and important properties of newly defined QP-WPT, such as  boundedness, reconstruction formula, Moyal's formula , Reproducing kernel.\\\
\item To formulate several classes of uncertainty inequalities, such as  Leib-type, the logarithmic uncertainty inequalities  and the Heisenberg-type uncertainty inequalities associated with
the quadratic phase wave packet transform.\\\

\end{itemize}

The paper is organised as follows. In Section \ref{sec 2}, we provide some preliminary results required in subsequent sections. In Section \ref{sec 3}, we provide  the  definition of  quadratic phase wave packet transform (QP-WPT). Then we investigated  several basic properties of the QP-WPT  which are important for signal representation in signal processing. In Section \ref{sec 4}, we develop a series of uncertainty inequalities such as  Leib's uncertainty principle, the logarithmic uncertainty inequality and the Heisenberg-type inequality associated with the quadratic phase
 wave packet transform. Finally, a conclusion is extracted in Section \ref{sec 5}.

\section{Preliminaries}\label{sec 2}
In this section we recall some basic concepts and notations, which will be useful in our
study on quadratic-phase wave packet transfrom(QPWPT).
\subsection{Fourier transfrom}
We use the following definition of Fourier transform [1] on $L^1(\mathbb R)$ space
\begin{equation}\label{ft}
\mathcal F[f](\xi)=\frac{1}{\sqrt{2\pi}}\int_{\mathbb R}f(t)e^{-i t\xi}dt, \quad \forall \xi\in\mathbb R.
\end{equation}

\subsection{Continuous wavelet transform}
Wavelet transform presents an attractive alternative to the STFT by using a time-frequency
window that changes with frequency, which can effectively provide resolution of varying granularity.
The continuous wavelet
transform (CWT) of a signal $f(t)\in L^2(\mathbb R)$ is defined as \cite{x20,x21} 
\begin{equation}\label{cwt}
CWT_f(\beta,\alpha)=\frac{1}{\sqrt{\alpha}}\int_{\mathbb R}f(t)\psi^{*}\left(\frac{t-\beta}{\alpha}\right)dt,
\end{equation}
where $*$ denotes the complex conjugate and $t$ is time, $\beta$ is the translation parameter, $\alpha$ is the scaling parameter and  $\psi(t)$ is the
transforming function, called mother wavelet. Here $\alpha >0$  and $\psi$  is normalized such that the$\|\psi\|=1$ in $L^2(\mathbb R)$ space.
\subsection{Windowed Fourier transform}
The windowed Fourier transform of $f(t)\in L^2(\mathbb R)$ with respect to the windowed function $\phi\in L^2(\mathbb R)$ is defined as \cite{x38}
\begin{equation}\label{wft}
\mathcal G_\phi[f](w,\beta)=\int_{\mathbb R}f(t)\phi^*(t-\beta)e^{-i\xi t}dt
\end{equation}
and the inverse of the function $f(t)\in L^2(\mathbb R)$ is defined by [24]
\begin{equation}\label{iwft}
f(t)=\frac{b}{2\pi\langle\phi,\psi\rangle}\int_{\mathbb R}\int_{\mathbb R}\mathcal G_\phi[f](\xi,\beta)e^{i\xi t}\psi(t-\beta)d\xi d\beta,
\end{equation}
where, $\psi\in L^2(\mathbb R).$
\subsection{Wave packet transform}
The wave packet transform (WPT) combines elements of STFT and CWT, and it can be viewed as \cite{x7,x8}
\begin{equation}\label{wpt}
WPT_f(\xi,\beta,\alpha)=\frac{1}{\sqrt{2\pi\alpha}}\int_{\mathbb R}f(t)\overline{\psi_{\alpha}\left({t-\beta}\right)}e^{-i\xi t}dt
\end{equation}
where $\psi_{\alpha}\left({t-\beta}\right)=\psi\left(\frac{t-\beta}{\alpha}\right).$

The WPT is the Fourier transform of a signal windowed with a wavelet that is dilated by $\alpha$ and translated by $\beta$.

\begin{lemma}\label{lem11}\cite{x36}
Let $\psi\in L^p(\mathbb R),\quad p\in [1,\infty).$ Then, $\|\psi_\alpha(t-\beta)\|_{L^p(\mathbb R)}=\alpha^{(1/p-1/2)}\|\psi\|_{L^p(\mathbb R)},$
\end{lemma}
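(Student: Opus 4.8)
The plan is to derive the identity directly from a single change of variables, since the statement is nothing more than the scaling behaviour of the $L^p$ norm under dilation. Recall that the dilated, translated window here is $\psi_\alpha(t-\beta)=\frac{1}{\sqrt{\alpha}}\,\psi\!\left(\frac{t-\beta}{\alpha}\right)$ with $\alpha>0$, the factor $\alpha^{-1/2}$ being the $L^2$-normalising constant carried along from the wave packet construction in \eqref{wpt}; this normalisation is precisely what produces the exponent $1/p-1/2$ rather than a bare $1/p$. First I would write the $p$-th power of the quantity to be computed as
\begin{equation*}
\big\|\psi_\alpha(t-\beta)\big\|_{L^p(\mathbb R)}^{\,p}=\int_{\mathbb R}\left|\frac{1}{\sqrt{\alpha}}\,\psi\!\left(\frac{t-\beta}{\alpha}\right)\right|^{p}dt=\alpha^{-p/2}\int_{\mathbb R}\left|\psi\!\left(\frac{t-\beta}{\alpha}\right)\right|^{p}dt,
\end{equation*}
pulling the constant $\alpha^{-p/2}$ outside the integral.

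Next I would perform the substitution $u=\frac{t-\beta}{\alpha}$, so that $t=\alpha u+\beta$ and $dt=\alpha\,du$; because $\alpha>0$ the orientation is preserved and the domain of integration remains all of $\mathbb R$. This turns the remaining integral into $\alpha\int_{\mathbb R}|\psi(u)|^{p}\,du=\alpha\,\|\psi\|_{L^p(\mathbb R)}^{\,p}$, whence
\begin{equation*}
\big\|\psi_\alpha(t-\beta)\big\|_{L^p(\mathbb R)}^{\,p}=\alpha^{-p/2}\cdot\alpha\,\|\psi\|_{L^p(\mathbb R)}^{\,p}=\alpha^{\,1-p/2}\,\|\psi\|_{L^p(\mathbb R)}^{\,p}.
\end{equation*}
Taking the $p$-th root of both sides and using $\big(\alpha^{1-p/2}\big)^{1/p}=\alpha^{1/p-1/2}$ then gives the claimed identity.

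There is no deep obstacle here; the argument is routine and the only points requiring care are purely bookkeeping. One must keep the normalising factor $\alpha^{-1/2}$ raised to the correct power $p$ as it passes through $|\cdot|^p$, and one must combine the exponents $-p/2$ and $+1$ correctly before extracting the $p$-th root so as to land on $1/p-1/2$. I would also remark that the hypothesis $p\in[1,\infty)$ validates the $p$-th-root step, that $\psi\in L^p(\mathbb R)$ makes every integral finite, and that $\alpha>0$ is exactly what legitimises both the change of variables and the factoring of $\alpha^{-p/2}$ out of the absolute value.
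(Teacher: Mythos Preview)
Your argument is correct and is the standard one-line change-of-variables computation; there is nothing to add mathematically. Note that the paper does not supply a proof of this lemma at all---it merely states the result with a citation to \cite{x36}---so there is no in-paper argument to compare against. Your choice of normalisation $\psi_\alpha(t)=\alpha^{-1/2}\psi(t/\alpha)$ is indeed the one required for the exponent $1/p-1/2$ to come out as stated (the paper is somewhat inconsistent about this prefactor across \eqref{wpt}, \eqref{cwt}, and Definition~\ref{def qpwpt}), and you were right to make that explicit.
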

\subsection{Quadratic-phase   Fourier transform }\ \\
 In this subsection we introduce the Quadratic-phase Fourier transform  which is a neoteric addition to the classical integral transforms and we also gave its inversion formula and some other classical results which are already present in literature.

\begin{definition}\label{qpft} Given a  parameter $\mu=(a,b,c,d,e),$  the QPFT of any signal $f$ is defined by \cite{ak}
\begin{equation}\label{eqnqpft}
\mathcal Q_\mu[f](\xi)=\int f(t)K_{\mu}(t,\xi)dt,
\end{equation}
where $K_{\mu}(t,\xi)$ is  the quadratic-phase Fourier kernel, given by
\begin{equation}\label{eqnker}
 K_{\mu}(t,\xi) =\sqrt{\frac{b}{2\pi i}}e^{(at^2+bt\xi+c\xi+dt+e\xi)}
\end{equation}

with $a,b,c,d,e \in\mathbb R,\quad b\ne 0.$
\end{definition}
\begin{theorem}\label{thinvqpft}The inversion formula of the quadratic-phase Fourier transform is given by \cite{ak}
\begin{equation}\label{eqninvqpft}
f(t)=\int\mathcal Q_{\mu}[f](\xi)\overline{K_{\mu}(t,\xi)}d\xi.
\end{equation}
\end{theorem}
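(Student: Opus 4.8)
The plan is to reduce the claim to the classical Fourier inversion theorem by peeling off the purely quadratic-phase factors that depend on $t$ alone and on $\xi$ alone. Writing the kernel in the form $K_\mu(t,\xi)=\sqrt{\frac{b}{2\pi i}}\,e^{i(at^2+bt\xi+c\xi^2+dt+e\xi)}$, I would first factor the definition \eqref{eqnqpft} as
\begin{equation*}
\mathcal Q_\mu[f](\xi)=\sqrt{\frac{b}{2\pi i}}\,e^{i(c\xi^2+e\xi)}\int_{\mathbb R} g(t)\,e^{ibt\xi}\,dt,\qquad g(t):=f(t)\,e^{i(at^2+dt)}.
\end{equation*}
The remaining integral is, up to the scaling $\xi\mapsto -b\xi$, exactly the Fourier transform \eqref{ft} of the chirped signal $g$, so $\mathcal Q_\mu[f](\xi)$ is a modulated and dilated copy of $\mathcal F[g]$. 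This exhibits the QPFT as a chirp multiplication, a scaling, and an ordinary Fourier transform composed in sequence, each of which is individually invertible.

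Next I would substitute this expression together with the conjugate kernel $\overline{K_\mu(t,\xi)}=\overline{\sqrt{\frac{b}{2\pi i}}}\,e^{-i(at^2+bt\xi+c\xi^2+dt+e\xi)}$ into the right-hand side of \eqref{eqninvqpft}. The $\xi$-only chirp $e^{i(c\xi^2+e\xi)}$ cancels against its conjugate, and the modulus $\big|\sqrt{b/2\pi i}\big|^2=\frac{|b|}{2\pi}$ comes out as a constant. After the change of variable $\omega=-b\xi$ the inner $\xi$-integral becomes $\int_{\mathbb R}\mathcal F[g](\omega)\,e^{i\omega t}\,d\omega$, which by the classical Fourier inversion theorem equals $\sqrt{2\pi}\,g(t)$. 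Collecting the constants $\frac{|b|}{2\pi}$, the $\sqrt{2\pi}$ relating the integral to $\mathcal F[g]$, and the $\frac{1}{|b|}$ Jacobian, they multiply to $1$, leaving $e^{-i(at^2+dt)}g(t)=f(t)$.

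As a quicker, purely formal check I would instead insert \eqref{eqnqpft} directly into \eqref{eqninvqpft}, interchange the order of integration, and evaluate the kernel reproducing integral
\begin{equation*}
\int_{\mathbb R} K_\mu(s,\xi)\,\overline{K_\mu(t,\xi)}\,d\xi=\frac{|b|}{2\pi}\,e^{i(a(s^2-t^2)+d(s-t))}\int_{\mathbb R} e^{ib(s-t)\xi}\,d\xi.
\end{equation*}
Here the $c\xi^2$ and $e\xi$ contributions again cancel, since they depend on $\xi$ alone and so appear identically in $K_\mu(s,\xi)$ and in $\overline{K_\mu(t,\xi)}$. Using $\int_{\mathbb R} e^{ib(s-t)\xi}\,d\xi=\frac{2\pi}{|b|}\,\delta(s-t)$, together with the fact that the residual phase equals $1$ on $s=t$, gives $\int_{\mathbb R} K_\mu(s,\xi)\overline{K_\mu(t,\xi)}\,d\xi=\delta(s-t)$; the sifting property then returns $f(t)$.

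The hard part is not the algebra but the analytic justification. The interchange of integrals (Fubini) and the delta-function identity are only formal for a general $f\in L^2(\mathbb R)$, since the inversion integral need not converge absolutely. I would therefore first establish the identity on a dense subspace — the Schwartz class, or $L^1(\mathbb R)\cap L^2(\mathbb R)$ — where every manipulation above is legitimate and the classical Fourier inversion theorem applies verbatim to $g$, and then extend to all of $L^2(\mathbb R)$ by density and the boundedness of $\mathcal Q_\mu$. I would also note that the computation requires the exponent of the kernel \eqref{eqnker} to be purely imaginary (that is, to carry the factor $i$), since with a real exponent neither the oscillatory integral nor the delta identity is meaningful and the transform fails to be invertible.
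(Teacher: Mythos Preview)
The paper does not actually prove this statement; Theorem~\ref{thinvqpft} is simply quoted from \cite{ak} with no argument given. Your proof is correct and is the standard one: factor the kernel as a $t$-chirp, a $\xi$-chirp, and a scaled Fourier character so that the QPFT becomes chirp--dilate--Fourier, then invoke classical Fourier inversion (or, equivalently, verify the reproducing identity $\int K_\mu(s,\xi)\overline{K_\mu(t,\xi)}\,d\xi=\delta(s-t)$ directly). Your attention to the analytic justification on a dense subspace followed by $L^2$-extension is more than the paper supplies anywhere. You have also caught genuine typos in \eqref{eqnker}: the exponent should carry the factor $i$ and the term should be $c\xi^2$, not $c\xi$ --- the rest of the paper (e.g.\ \eqref{eqn qpwpt}, \eqref{rel wft}) consistently uses $e^{i(at^2+bt\xi+c\xi^2+dt+e\xi)}$, confirming your reading.
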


Using the inversion theorem, we can get the Parseval's relation given by \cite{ak}
\begin{equation}
\langle f,g \rangle=\langle\mathcal Q_{\mu}[f],\mathcal Q_{\mu}[g] \rangle
\end{equation}
and Plancherel identity is given by
\begin{equation}
\int|\mathcal Q_{\mu}[f](\xi)|^2d\xi=\int|f(t)|^2dt.
\end{equation}
\begin{theorem}\label{nth2}\cite{fs1,fs2}Let $f,g\in L^2(\mathbb R)$ and $\alpha,\beta,\tau\in \mathbb R$ then
\begin{itemize}
\item$\mathcal Q_{\mu}[\alpha f+\beta g](\xi)=\alpha \mathbb Q_{\mu}[f](w)+\beta\mathbb Q_{\mu}[g](w).$
\item$\mathcal Q_{\mu}[f(t-\tau)](\xi)=exp\{-i(a\tau^2+b\tau \xi+d\tau)\}\mathcal Q_{\mu}[e^{-2ia\tau t}f(t)](\xi).$
\item$\mathcal Q_{\mu}[f(-t)](\xi)=\mathcal Q_{\mu'}[f(t)](-\xi),\quad \mu'=(a,b,c,-d,-e).$
\item$\mathcal Q_{\mu}[e^{i\alpha t}f(t)](\xi)=exp\{i(\alpha^2+2\alpha b \xi+\alpha eb)\frac{1}{b}\}\mathcal Q_{\mu}[f]\left(w+\frac{a}{b}\right).$
 \item$\mathcal Q_{\mu}[\overline{f(t)}](w)=\overline{\mathcal Q_{-\mu}[{f(t)}](w)}.$
\end{itemize}
\end{theorem}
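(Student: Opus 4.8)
The plan is to treat each of the five identities by returning to the defining integral \eqref{eqnqpft} together with the explicit kernel \eqref{eqnker}, and then proceeding either by a change of variable or by rewriting the exponent. Every assertion reduces to careful algebraic bookkeeping of the quadratic and linear phase terms in $K_\mu(t,\xi)$, organized according to whether a given term depends on the integration variable $t$, on the output variable $\xi$, or on neither.

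The first identity (linearity) is immediate, since the kernel does not depend on $f$ and the integral is linear. For the translation rule I would write $\mathcal Q_\mu[f(t-\tau)](\xi)=\int f(t-\tau)K_\mu(t,\xi)\,dt$ and substitute $u=t-\tau$. Expanding $a(u+\tau)^2=au^2+2a\tau u+a\tau^2$ and $b(u+\tau)\xi=bu\xi+b\tau\xi$ splits the new exponent into three pieces: the original kernel $K_\mu(u,\xi)$; a factor depending only on $\tau$ and $\xi$, namely $\exp\{-i(a\tau^2+b\tau\xi+d\tau)\}$ in the paper's sign convention, which pulls out in front of the integral; and a factor linear in $u$ which recombines with $f(u)$ to produce the modulated signal $e^{-2ia\tau t}f(t)$ inside the transform. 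This is precisely how the extra modulation is generated, and it is the step where the sign conventions must be tracked most carefully.

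The reflection identity follows from the substitution $t\mapsto -t$: in the exponent the term $at^2$ is unchanged while $bt\xi$ and $dt$ change sign, and comparing with $K_{\mu'}(t,-\xi)$ for $\mu'=(a,b,c,-d,-e)$ shows that the two kernels coincide once $\xi\mapsto-\xi$ is also applied. The modulation identity is proved by absorbing the extra factor $e^{i\alpha t}$ into the linear-in-$t$ part of the exponent; writing $i\alpha t+ibt\xi=ibt(\xi+\alpha/b)$ shows that modulation amounts to a shift of the spectral variable by $\alpha/b$, and the residual terms generated by this shift are independent of $t$ and therefore emerge as a prefactor of the stated form. Finally, for the conjugation rule I would take complex conjugates inside $\mathcal Q_{-\mu}[f](w)=\int f(t)K_{-\mu}(t,w)\,dt$ and verify the kernel symmetry $\overline{K_{-\mu}(t,w)}=K_\mu(t,w)$, which holds because conjugation reverses the sign of the imaginary exponent while the substitution $\mu\mapsto-\mu$ reverses the signs of all five parameters, so the two cancellations combine to reproduce the exponent of $K_\mu$.

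I expect the only genuine subtlety, as opposed to routine substitution, to be the complex normalizing prefactor $\sqrt{b/(2\pi i)}$. In the conjugation identity one must check that $\overline{\sqrt{b/(2\pi i)}}$, with $b$ possibly negative, matches the prefactor of $K_\mu$, and in the modulation identity one must confirm that shifting $\xi$ does not disturb this constant. These are the places where the branch choice for the square root and the sign of $b$ have to be made explicit; the remaining phase arithmetic is straightforward once the $t$-dependent, $\xi$-dependent, and constant contributions are separated.
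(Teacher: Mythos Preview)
The paper does not actually give a proof of this theorem: it is stated with the citation \cite{fs1,fs2} and no argument is supplied, so there is nothing in the paper to compare against. Your plan---returning to the defining integral \eqref{eqnqpft} and handling each identity by substitution plus separation of the exponent into $t$-dependent, $\xi$-dependent, and constant pieces---is exactly the standard route and is correct in outline.

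One word of caution: the statement as printed in the paper contains several apparent typos (e.g.\ the shift in the modulation identity is written $w+\tfrac{a}{b}$ rather than $\xi+\tfrac{\alpha}{b}$, and the kernel in \eqref{eqnker} is missing an $i$ in the exponent and has $c\xi$ rather than $c\xi^2$). If you carry out the translation computation honestly with the kernel used elsewhere in the paper, $K_\mu(t,\xi)=\sqrt{b/(2\pi i)}\,e^{i(at^2+bt\xi+c\xi^2+dt+e\xi)}$, the substitution $u=t-\tau$ produces the prefactor $e^{+i(a\tau^2+b\tau\xi+d\tau)}$ and the modulation $e^{+2ia\tau t}$, the opposite signs from those printed. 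So your instinct to track the sign conventions carefully is well placed; the discrepancy you would find is an artifact of the paper's typography, not a flaw in your argument. Your remark about the square-root prefactor in the conjugation identity is also to the point: verifying $\overline{\sqrt{-b/(2\pi i)}}=\sqrt{b/(2\pi i)}$ requires fixing a branch, and it does hold with the principal branch when $b>0$.
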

\begin{theorem}[Convolution\cite{ak}]\label{nth3}If $f,g\in L^2(\mathbf R)$ then
\begin{equation}
\mathcal Q_{\mu}[f\ast_{\mu} g](\xi)=\sqrt{\frac{2\pi i}{b}}e^{-i(c\xi^2+e\xi)}\mathcal Q_{\mu}[f](\xi)\mathcal Q_{\mu}[e^{-ia(\cdot)^2-id(\cdot)}g](\xi).
\end{equation}
Where \begin{equation}\label{convdef}
(f\ast_{\mu} g)(t)=\int_{\mathbf R}f(x)g(t-x)e^{-ia(t^2-z^2)-id(t-z)}dz.
\end{equation}
\end{theorem}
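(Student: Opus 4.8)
The plan is to expand $\mathcal Q_\mu[f\ast_\mu g](\xi)$ directly from the definition \eqref{eqnqpft} of the QPFT, insert the definition \eqref{convdef} of the quadratic-phase convolution (read consistently in the single integration variable $z$), and then collapse the resulting double integral into a product of two transforms by a change of variables. Concretely, I would first write
\begin{equation*}
\mathcal Q_\mu[f\ast_\mu g](\xi)=\sqrt{\frac{b}{2\pi i}}\int_{\mathbb R}\int_{\mathbb R}f(z)\,g(t-z)\,e^{-ia(t^2-z^2)-id(t-z)}\,e^{i(at^2+bt\xi+c\xi^2+dt+e\xi)}\,dz\,dt,
\end{equation*}
invoking Fubini's theorem to treat this as a genuine iterated integral.

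The decisive simplification is that the chirp factor $e^{-iat^2-idt}$ carried by the convolution exactly cancels the $e^{iat^2+idt}$ piece of the quadratic-phase kernel. After this cancellation the remaining $t$-dependence in the exponent is purely linear, namely $e^{ibt\xi}$, while the surviving $z$-dependence reassembles the chirp $e^{iaz^2+idz}$. I would then substitute $u=t-z$ (so $t=u+z$, $dt=du$) and split $e^{ib(u+z)\xi}=e^{ibu\xi}e^{ibz\xi}$; this decouples the double integral into a $z$-integral built from $f$ and a $u$-integral built from $g$.

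At this stage the $z$-integral, together with one factor $\sqrt{b/2\pi i}$ and one phase $e^{i(c\xi^2+e\xi)}$, is precisely $\mathcal Q_\mu[f](\xi)$, while the $u$-integral containing $g(u)e^{ibu\xi}$ is recognized as $\mathcal Q_\mu[e^{-ia(\cdot)^2-id(\cdot)}g](\xi)$, since the modulation $e^{-iau^2-idu}$ is exactly what cancels the kernel's quadratic chirp in that transform. Because each of the two transforms supplies its own copy of the phase $e^{i(c\xi^2+e\xi)}$ and its own constant $\sqrt{b/2\pi i}$, the final bookkeeping amounts to pulling out one redundant phase $e^{-i(c\xi^2+e\xi)}$ and the single constant $\sqrt{2\pi i/b}$; using $\sqrt{b/2\pi i}\cdot\sqrt{2\pi i/b}=1$ one lands exactly on the claimed identity. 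Tracking these normalization and sign factors is the only place where a slip is likely.

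The main obstacle is the analytic justification of Fubini's theorem. Since $a,b,c,d,e$ are real, the kernel has constant modulus $|K_\mu(t,\xi)|=\sqrt{|b|/2\pi}$, so absolute convergence of the double integral reduces to the integrability of $(z,t)\mapsto|f(z)|\,|g(t-z)|$; this holds whenever $f,g\in L^1(\mathbb R)\cap L^2(\mathbb R)$. I would therefore establish the identity first on this dense subclass, where the interchange of order is legitimate, and then extend it to all of $L^2(\mathbb R)$ by a density argument combined with the Plancherel identity recorded after Theorem \ref{thinvqpft}.
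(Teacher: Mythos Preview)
Your argument is correct: the cancellation of the chirp $e^{-iat^2-idt}$ against the kernel's $e^{iat^2+idt}$, followed by the substitution $u=t-z$, does decouple the double integral, and your bookkeeping of the two factors $\sqrt{2\pi i/b}\,e^{-i(c\xi^2+e\xi)}$ coming from each transform is accurate. The density argument via $L^1\cap L^2$ and Plancherel is the right way to justify Fubini under the stated hypothesis $f,g\in L^2(\mathbb R)$.

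As for comparison with the paper: there is nothing to compare. The paper does not prove this theorem; it is recorded in the preliminaries with a citation to \cite{ak} and no argument is supplied. Your proposal therefore goes beyond what the paper itself contains. It is also worth noting (as you implicitly did) that the printed formula \eqref{convdef} mixes the dummy variables $x$ and $z$; your reading of it with a single integration variable is the only sensible one, and the kernel \eqref{eqnker} as printed is missing the factor $i$ in the exponent and has $c\xi$ in place of $c\xi^2$, whereas the rest of the paper uses the corrected form you adopted.
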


\subsection{\bf Quadratic-phase wavelet transform }\ \\
The generalization of the classical continuous
wavelet transform, continuous fractional wavelet transform, as well as generalization of linear canonical
wavelet transform, is the quadratic phase wavelet transform (QPWT).

  For a signal  $f(t)\in\L^2(\mathbb R^2),$  the continuous quadratic-phase wavelet transform of $f$ with respect to an
analyzing wavelet $\psi\in L^2(\mathbb R)$ and the parameter set $\mu = (a, b,c,d,e)$  is defined by \cite{ak}
\begin{equation}\label{cqpwt}
CQPWT_f(\beta,\alpha)=\sqrt{\frac{b}{2\pi  i}}\int_{\mathbb R}f(t)\overline{\psi^\mu_{\beta,\alpha}(t)}dt,
\end{equation}
where the family $\psi^\mu_{\beta,\alpha}(t)$ is called quadratic-phase wavelet (QPW) and is given by
 \begin{equation}
 \psi^\mu_{\beta,\alpha}(t)=\frac{1}{\sqrt{\alpha}}\psi\left(\frac{t-\beta}{\alpha}\right)e^{-ia(t^2-\beta^2)-id(t-\beta)}.
 \end{equation}
\begin{lemma}\label{psi}\cite{ak}
If $\psi\in L^2(\mathbb R),$ Then $\psi^\mu_{\beta,\alpha}\in L^2(\mathbb R)$ with $\|\psi^\mu_{\beta,\alpha}\|^2=\|\psi\|^2.$
\end{lemma}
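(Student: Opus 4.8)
The plan is to compute $\|\psi^\mu_{\beta,\alpha}\|^2$ directly from the definition and reduce it to $\|\psi\|^2$ by a single change of variables, the only analytic input being that the quadratic-phase factor is a pure phase. First I would unfold the norm:
\begin{equation*}
\|\psi^\mu_{\beta,\alpha}\|^2=\int_{\mathbb R}\left|\psi^\mu_{\beta,\alpha}(t)\right|^2 dt=\frac{1}{\alpha}\int_{\mathbb R}\left|\psi\left(\frac{t-\beta}{\alpha}\right)\right|^2\left|e^{-ia(t^2-\beta^2)-id(t-\beta)}\right|^2 dt.
\end{equation*}
The key observation is that $a$ and $d$ are \emph{real} parameters of the quadratic-phase kernel (Definition \ref{qpft}), so the exponent $-ia(t^2-\beta^2)-id(t-\beta)$ is purely imaginary and the chirp has unit modulus; it therefore disappears from the integrand, leaving
\begin{equation*}
\|\psi^\mu_{\beta,\alpha}\|^2=\frac{1}{\alpha}\int_{\mathbb R}\left|\psi\left(\frac{t-\beta}{\alpha}\right)\right|^2 dt.
\end{equation*}

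Next I would substitute $u=(t-\beta)/\alpha$, so that $dt=\alpha\,du$ and the prefactor $\tfrac{1}{\alpha}$ is cancelled exactly, yielding $\|\psi^\mu_{\beta,\alpha}\|^2=\int_{\mathbb R}|\psi(u)|^2\,du=\|\psi\|^2$. Since $\psi\in L^2(\mathbb R)$ the right-hand side is finite, which simultaneously shows $\psi^\mu_{\beta,\alpha}\in L^2(\mathbb R)$ and gives the claimed identity. This is precisely the $L^2$ dilation-isometry recorded in the case $p=2$ of Lemma \ref{lem11}: the normalizing factor $\alpha^{-1/2}$ built into the definition of $\psi^\mu_{\beta,\alpha}$ is chosen exactly so that the combined translation, dilation and chirp modulation preserve energy.

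There is essentially no obstacle here, since the statement is a normalization identity asserting that modulation by a real chirp together with an $L^2$-normalized dilation and translation leaves the energy unchanged. The single point that requires any care is to verify that every parameter appearing in the exponential is real, so that the modulus of the chirp is genuinely $1$; this is guaranteed by the standing hypothesis $a,b,c,d,e\in\mathbb R$ of Definition \ref{qpft}. Everything else is the routine change of variables above.
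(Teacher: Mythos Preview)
Your proof is correct. The paper does not actually prove this lemma: it is stated with a citation to \cite{ak} and no argument is given. Your direct computation---dropping the unimodular chirp (using that $a,d\in\mathbb R$) and then the substitution $u=(t-\beta)/\alpha$---is precisely the standard verification of this normalization identity and is what any proof of this statement would amount to.
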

Now we  are ready to introduce a novel integral transform the quadratic phase wave packet transform.

\section{\bf Quadratic-phase wavelet packet transform(QP-WPT)}\ \\
\label{sec 3}
In this section, we propose a  definition of QP-WPT based on the idea of WPT by adding extra dimension to kernel and wavelet. We replace the wave packet transform kernel by the QPFT kernel and the wavelet by the
quadratic-phase wavelet (QPW).

\begin{definition}[QP-WPT]\label{def qpwpt}
The QP-WPT transform of a function $f\in\L^2(\mathbb R)$ with respect to wavelet function $\psi$ is defined as
\begin{eqnarray}\label{eqn qpwpt}
\nonumber W^\mu_f(\xi,\beta,\alpha)&=&\int_{\mathbb R}f(t)\overline{\psi^\mu_{\beta,\alpha}(t)}K_\mu(t,\xi)dt\\
\nonumber &=&\sqrt{\frac{b}{2\pi i}}\int_{\mathbb R}e^{i(at^2+bt\xi+c\xi^2+dt+e\xi)}f(t)\overline{\psi^\mu_{\beta,\alpha}(t)}dt,\\
\end{eqnarray}
where $\psi^\mu_{\beta,\alpha}(t)=\psi_{\alpha}\left(t-\beta\right)e^{-ia(t^2-\beta^2)-id(t-\beta)}$ and $\psi_\alpha(t)=\frac{1}{\alpha}\psi\left(\frac{t}{\alpha}\right).$
\end{definition}
\begin{remark} By varying the parameter $\mu=(a,b,c,d,e)$
Definition \ref{def qpwpt} embodies certain existing time-frequency transforms and also give birth to some novel
time-frequency tools which are yet to be reported in the open literature which are listed below:
\begin{itemize}
\item For $\mu=(a/2b,-1/b,c/2b,0,0)$, Definition \ref{def qpwpt} boils down to the novel linear canonical wave packet
transform\begin{equation*}
W^\mu_f(\xi,\beta,\alpha)=\int_{\mathbb R}f(t)K_\mu(t,\xi)\overline{\psi\left(\frac{t-\beta}{\alpha}\right)}e^{i\frac{a}{2b}(t^2-\beta^2)}dt
\end{equation*}
\item For $\mu=(\cot\theta,-\csc\theta,\cot\theta,0,0)$, $\theta\ne n\pi$
Definition \ref{def qpwpt} reduces to the novel fractional
wave packet
transform\begin{equation*}
W^\mu_f(\xi,\beta,\alpha)=\int_{\mathbb R}f(t)K_\mu(t,\xi)\overline{\psi\left(\frac{t-\beta}{\alpha}\right)}e^{i\cot\theta(t^2-\beta^2)}dt
\end{equation*}
\item For $\mu=(1,b,0,1,0)$, $b\ne 0$, we can obtain the novel Fresnel wave packet
transform\begin{equation*}
W^\mu_f(\xi,\beta,\alpha)=\int_{\mathbb R}f(t)K_\mu(t,\xi)\overline{\psi\left(\frac{t-\beta}{\alpha}\right)}e^{i(t^2-\beta^2)+id(t-\beta)}dt
\end{equation*}
\item For $\mu=(0,-1,1,0,0)$,
Definition \ref{def qpwpt} reduces to the classical
wave packet
transform\begin{equation*}
W^\mu_f(\xi,\beta,\alpha)=\int_{\mathbb R}f(t)K_\mu(t,\xi)\overline{\psi\left(\frac{t-\beta}{\alpha}\right)}dt
\end{equation*}
\end{itemize}
\end{remark}
\begin{theorem}\label{th1} Let $W^\mu_f(\xi,\beta,\alpha)$ and $Q_\mu[f]$ be the QP-WPT and QPFT of a function $f\in L^2(\mathbb R),$ respectively and  let $\psi^\mu_{\beta,\alpha}$ be the QPW, then we have
\begin{eqnarray}
\nonumber W^\mu_f(\xi,\beta,\alpha)&=&\sqrt{\alpha}\int_{\mathbb R}K_\mu(w,\beta)e^{-i[c(\alpha w)^2+e(\alpha w)-2cw\xi]}\\\
\nonumber&&\qquad\qquad\mathcal Q_\mu[e^{ia(.)^2+id(.)}f(t)](w+\xi)Q_\mu[e^{-ia(.)^2-id(.)}\psi(.)](\alpha w)dw\\
\end{eqnarray}
\end{theorem}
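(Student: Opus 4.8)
The plan is to start from the definition \eqref{eqn qpwpt} and to dissolve the scaled, translated window into a superposition of quadratic-phase kernels by inverting the QPFT, so that the single $t$-integral defining $W^\mu_f$ refactors into a $w$-integral over a product of two QPFTs. First I would peel the chirp off the conjugated window, writing $\overline{\psi^\mu_{\beta,\alpha}(t)}=\alpha^{-1/2}\,\overline{\psi\!\left(\tfrac{t-\beta}{\alpha}\right)}\,e^{ia(t^2-\beta^2)+id(t-\beta)}$ (in the $L^2$-normalized form $\alpha^{-1/2}$ consistent with Lemma \ref{psi}, which is what produces the global $\sqrt{\alpha}$), and then apply the inversion formula of Theorem \ref{thinvqpft} to the de-chirped window $\Psi:=e^{-ia(\cdot)^2-id(\cdot)}\psi$ evaluated at $(t-\beta)/\alpha$. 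After the substitution $\zeta=\alpha w$ in the inversion integral, the chirp carried by $K_\mu\!\left(\tfrac{t-\beta}{\alpha},\alpha w\right)$ cancels \emph{exactly} against the prefactor $e^{-ia(t-\beta)^2/\alpha^2-id(t-\beta)/\alpha}$ coming from $\Psi$, and the Jacobian $\alpha\,dw$ combines with $\alpha^{-1/2}$ to give $\sqrt{\alpha}$. This leaves the clean window representation
\begin{equation*}
\alpha^{-1/2}\,\overline{\psi\!\left(\tfrac{t-\beta}{\alpha}\right)}=\sqrt{\alpha}\,\sqrt{\tfrac{b}{2\pi i}}\int_{\mathbb{R}}\overline{\mathcal{Q}_\mu[\Psi](\alpha w)}\,e^{i\left(b(t-\beta)w+c\alpha^2w^2+e\alpha w\right)}\,dw,
\end{equation*}
which already isolates the factor $\mathcal{Q}_\mu[e^{-ia(\cdot)^2-id(\cdot)}\psi](\alpha w)$ demanded by the statement.

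Next I would substitute this back into \eqref{eqn qpwpt} and interchange the $t$- and $w$-integrations, which is the one genuinely analytic point (Fubini, legitimate for $f,\psi\in L^1\cap L^2$ and then extended to $L^2$ by density). The inner $t$-integral then reads $e^{-ia\beta^2-id\beta}\int_{\mathbb{R}}\big(e^{iat^2+idt}f(t)\big)\,e^{ibtw}K_\mu(t,\xi)\,dt$, and the decisive algebraic step is the frequency-shift identity
\begin{equation*}
e^{ibtw}K_\mu(t,\xi)=K_\mu(t,w+\xi)\,e^{-i\left(cw^2+2cw\xi+ew\right)},
\end{equation*}
obtained by completing the kernel's $\xi$-dependence from $\xi$ to $w+\xi$. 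This converts the inner integral into $\mathcal{Q}_\mu[e^{ia(\cdot)^2+id(\cdot)}f](w+\xi)$ up to the displayed phase, producing exactly the second QPFT factor appearing in the theorem.

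Finally I would gather all $w$-dependent exponents—$c\alpha^2w^2+e\alpha w-b\beta w$ from the window representation, $-cw^2-2cw\xi-ew$ from the shift identity, and the constant $e^{-ia\beta^2-id\beta}$—and repackage them: reassembling the terms $aw^2+bw\beta+c\beta^2+dw+e\beta$ reconstitutes $K_\mu(w,\beta)$ (whose loose $\sqrt{b/2\pi i}$ matches the explicit prefactor from the inversion step), while the residue is the correction factor $e^{-i[c(\alpha w)^2+e(\alpha w)-2cw\xi]}$. I expect the main obstacle to be purely this phase bookkeeping: tracking the several quadratic and linear chirps through the scaling $\zeta=\alpha w$ without sign or $\alpha$-power slips, and confirming each chirp cancellation is exact. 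I would also reconcile two normalization/sign conventions as I compute—namely that the window factor naturally emerges as $\overline{\mathcal{Q}_\mu[\Psi](\alpha w)}$ and that the $\alpha^{-1/2}$ dilation (rather than the $\alpha^{-1}$ written in Definition \ref{def qpwpt}) is the one that yields the stated $\sqrt{\alpha}$ in agreement with Lemma \ref{psi}.
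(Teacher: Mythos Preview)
Your plan is sound and will produce the claimed identity; the frequency-shift computation $e^{ibtw}K_\mu(t,\xi)=K_\mu(t,w+\xi)e^{-i(cw^2+2cw\xi+ew)}$ is exactly the algebraic hinge, and your remarks about the $\alpha^{-1/2}$ normalization and the missing conjugate on $\mathcal{Q}_\mu[\Psi](\alpha w)$ are accurate observations about inconsistencies already present in the paper's statement.

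Your route, however, differs from the paper's. The paper does not expand the window via the inversion formula and then invoke Fubini. Instead it observes that $W^\mu_f(\xi,\beta,\alpha)=\langle f_{\xi,\mu},\psi^\mu_{\beta,\alpha}\rangle$ with $f_{\xi,\mu}(t)=K_\mu(t,\xi)f(t)$, and applies Parseval for the QPFT to rewrite this as $\langle \mathcal{Q}_\mu[f_{\xi,\mu}],\mathcal{Q}_\mu[\psi^\mu_{\beta,\alpha}]\rangle$. It then computes $\mathcal{Q}_\mu[f_{\xi,\mu}](w)$ directly (obtaining the shift to $w+\xi$ and the phase $e^{-2icw\xi}$ in one step) and quotes the formula for $\mathcal{Q}_\mu[\psi^\mu_{\beta,\alpha}](w)$ from \cite{ak}. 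The Parseval argument is shorter and hides the integral interchange inside an already-established identity, whereas your inversion-plus-Fubini approach is more self-contained and makes explicit where the analytic justification (density, Fubini) is needed. Both arrive at the same phase bookkeeping, and your careful tracking of the scaling $\zeta=\alpha w$ corresponds exactly to the $\alpha w$ argument appearing in the cited formula for $\mathcal{Q}_\mu[\psi^\mu_{\beta,\alpha}]$.
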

\begin{proof}
Let us denote
\begin{equation*}f_{\xi,\mu}=\sqrt{\frac{b}{2\pi i}}e^{i(at^2+bt\xi+c\xi^2+dt+e\xi)}f(t).\end{equation*}
On taking QPFT on both sides of above equation, we have
\begin{eqnarray*}
\mathcal Q_\mu[f_{\xi,\mu}]&&\\\
&&=\int_{\mathbb R}K(w,t)f_{\xi,\mu}(t)dt\\\
&&=\int_{\mathbb R}\sqrt{\frac{b}{2\pi i}}e^{i(at^2+btw+cw^2+dt+ew)}\sqrt{\frac{b}{2\pi i}}e^{i(at^2+bt\xi+c\xi^2+dt+e\xi)}f(t)dt\\\
&&=\sqrt{\frac{b}{2\pi i}}\int_{\mathbb R}\sqrt{\frac{b}{2\pi i}}e^{i[at^2+bt(w+\xi)+c(w+\xi)^2+dt+e(w+\xi)]}\\\
&&\qquad\qquad\qquad\qquad\times e^{i(at^2+dt)} f(t)e^{-i(2cw\xi)}dt\\\
&&=\sqrt{\frac{b}{2\pi i}}e^{-i2cw\xi}\mathcal Q_\mu[e^{i(at^2+dt)}f(t)](w+\xi).
\end{eqnarray*}
From \cite{ak}, we have
\begin{eqnarray*}
Q_\mu[\psi^\mu_{\beta,\alpha}](w)&&\\
&&=\sqrt{\alpha}e^{i(a\beta^2+b\beta w+cw^2+d\beta+ew)-ic(\alpha w)^2-ie(\alpha w)}\\
&&\qquad\qquad\times Q_\mu[e^{-ia(.)^2-id(.)}\psi(.)](\alpha w).
\end{eqnarray*}
The QP-WPT is represented in terms of inner product of $f_{\xi,\mu}$ and $\psi^\mu_{\beta,\alpha}$ and by Parseval theorem of QPFT, we have
\begin{eqnarray*}
W^\mu_f(\xi,\beta,\alpha)&&\\\
&&=\langle f_{\xi,\mu},\psi^\mu_{\beta,\alpha}\rangle\\\
&&=\left\langle \mathcal Q_\mu[f_{\xi,\mu}], Q_\mu[\psi^\mu_{\beta,\alpha}]\right\rangle\\\
&&=\sqrt{\frac{\alpha b}{2\pi i}}\int_{\mathbb R}e^{i(a\beta^2+b\beta w+cw^2+d\beta+ew-c(\alpha w)^2-e(\alpha w)-2cw\xi)}\\\
&&\qquad\qquad\qquad\times\mathcal Q_\mu[e^{i(at^2+dt)}f(t)](w+\xi)Q_\mu[e^{-ia(.)^2-id(.)}\psi(.)](\alpha w)dw.
\end{eqnarray*}
Now using \ref{eqnker}, we get the desired proof.
\end{proof}

Further, the definition of the QP-WPT in \ref{eqn qpwpt} can be rewritten as

\begin{eqnarray}\label{eqn2 qpwpt}
\nonumber W^\mu_f(\xi,\beta,\alpha)&&\\\
\nonumber&&=\sqrt{\frac{b}{2\pi i}}\int_{\mathbb R}e^{i(at^2+bt\xi+c\xi^2+dt+e\xi)+ia(t^2-\beta^2)+id(t-\beta)}\\\
\nonumber&&\qquad\qquad\qquad\times f(t)\overline{\psi_\alpha(t-\beta)}dt\\\
\nonumber&&=\int_{\mathbb R}f(t)\overline{\psi^\mu_{\xi,\beta,\alpha,}}dt,\\\
\end{eqnarray}
where

\begin{equation}
\psi^\mu_{\xi,\beta,\alpha,}(t)=\overline{\left(\sqrt{\frac{b}{2\pi i}}\right)}e^{-i(at^2+bt\xi+c\xi^2+dt+e\xi)-ia(t^2-\beta^2)-id(t-\beta)}\psi_\alpha(t-\beta).
\end{equation}

\begin{proposition}[Relation with WFT]
\begin{eqnarray}\label{rel wft}
\nonumber W^\mu_f(\xi,\beta,\alpha)&&\\\
\nonumber&&=\sqrt{\frac{b}{2\pi i}}\int_{\mathbb R}e^{i(at^2+bt\xi+c\xi^2+dt+e\xi)+ia(t^2-\beta^2)+id(t-\beta)}f(t)\overline{\psi_\alpha(t-\beta)}dt\\\
\nonumber&&=e^{i(c\xi^2+e\xi-a\beta^2-d\beta)}\int_{\mathbb R}\sqrt{\frac{b}{2\pi i}}e^{i(2at^2+bt\xi+2dt)}f(t)\overline{\psi_\alpha(t-\beta)}dt\\\
\nonumber&&=e^{i(c\xi^2+e\xi-a\beta^2-d\beta)}\int_{\mathbb R}\sqrt{\frac{b}{2\pi i}}e^{i(2at^2+2dt)}f(t)\overline{\psi_\alpha(t-\beta)}e^{ibt\xi}dt\\\
\nonumber&&=e^{i(c\xi^2+e\xi-a\beta^2-d\beta)}\mathcal G_{\psi^\mu}[h](b\xi ,\beta)\\\
\end{eqnarray}
\end{proposition}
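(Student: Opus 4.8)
The plan is to prove this identity by a purely algebraic rearrangement of the quadratic phase appearing in the integrand of \ref{eqn2 qpwpt}, followed by pattern-matching against the definition \ref{wft} of the windowed Fourier transform. No analytic machinery is needed here; the entire content is careful bookkeeping of the exponent, so the chain of equalities in the statement can be established line by line.

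First I would start from the rewritten form \ref{eqn2 qpwpt} and expand the total phase
\[
i\left(at^2+bt\xi+c\xi^2+dt+e\xi\right)+ia\left(t^2-\beta^2\right)+id\left(t-\beta\right),
\]
collecting terms according to their dependence on the integration variable $t$. The two $t^2$-contributions combine into $2at^2$, the linear-in-$t$ contributions combine into $bt\xi+2dt$, and the leftover part $c\xi^2+e\xi-a\beta^2-d\beta$ is independent of $t$. Pulling the $t$-free phase $e^{i(c\xi^2+e\xi-a\beta^2-d\beta)}$ outside the integral yields the first equality. Next I would factor the surviving exponential as $e^{i(2at^2+bt\xi+2dt)}=e^{i(2at^2+2dt)}e^{ibt\xi}$, which separates the pure chirp $e^{i(2at^2+2dt)}$ from the Fourier-type factor $e^{ibt\xi}$ and produces the second and third equalities.

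To close the argument I would absorb the chirp together with the normalising constant into the signal, defining the chirped function $h(t)=\sqrt{\tfrac{b}{2\pi i}}\,e^{i(2at^2+2dt)}f(t)$ and taking the window to be $\psi^\mu=\psi_\alpha$. The remaining integral $\int_{\mathbb R}h(t)\overline{\psi^\mu(t-\beta)}e^{ibt\xi}\,dt$ then fits exactly the template of \ref{wft}, with window-translation parameter $\beta$ and frequency argument $b\xi$, so it is precisely $\mathcal G_{\psi^\mu}[h](b\xi,\beta)$, giving the final equality. The one point requiring care—and the step I expect to be the main obstacle—is reconciling the sign of the frequency exponent: the QP-WPT naturally generates $e^{+ibt\xi}$, whereas \ref{wft} is written with $e^{-i\xi t}$. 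I would resolve this by reading the frequency slot of $\mathcal G$ under the sign convention fixed by \ref{wft} (equivalently, by choosing the conjugation in the window $\psi^\mu$ so that the kernels agree), after which evaluation at frequency $b\xi$ reproduces the factor $e^{ibt\xi}$ and every other manipulation is a routine identification.
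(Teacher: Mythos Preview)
Your proposal is correct and follows exactly the approach of the paper: the displayed chain of equalities in the proposition \emph{is} the paper's entire argument, and your line-by-line bookkeeping of the exponent---combining $at^2+a(t^2-\beta^2)$ into $2at^2$, extracting the $t$-free phase, splitting off $e^{ibt\xi}$, and pattern-matching against \eqref{wft}---reproduces it verbatim. Your remark about the sign of the frequency exponent is well taken; the paper's own convention in \eqref{wft} uses $e^{-i\xi t}$ while the third line here carries $e^{+ibt\xi}$, so the final identification with $\mathcal G_{\psi^\mu}[h](b\xi,\beta)$ is to be read formally under a tacit sign convention, exactly as you surmise.
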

where $h(t)=\sqrt{\frac{b}{2\pi i}}e^{i(2at^2+2dt)}f(t)$

\subsection{Basic properties of the QP-WPT}
In this subsection we prove some notable inequalities associated with the QP-WPT. Moreover, we also investigate some basic properties of the QP-WPT which are important for signal representation in signal processing.  
\begin{lemma}\label{lem bdd}
Let $\psi_\alpha\in\L^p(\mathbb R)$ and $f\in L^q(\mathbb R)$ and $p,q\in [1,\infty)$  with $\frac{1}{p}+\frac{1}{q},$ then
\begin{equation}\label{eqn bdd}
|W^\mu_f(\xi,\beta,\alpha)|\le \alpha^{1/p-1/2}\sqrt{\frac{b}{2\pi}}\|\psi\|_{L^p(\mathbb R)}\|f\|_{L^q(\mathbb R)}.
\end{equation}
\end{lemma}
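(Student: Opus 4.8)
The plan is to estimate the QP-WPT directly from its defining integral \eqref{eqn qpwpt}, bounding the modulus of the integrand pointwise and then separating the two factors by Hölder's inequality. First I would pass the absolute value inside the integral and observe that the quadratic-phase kernel is unimodular up to its normalising constant: since $|e^{i(at^2+bt\xi+c\xi^2+dt+e\xi)}|=1$ and $\bigl|\sqrt{b/(2\pi i)}\bigr|=\sqrt{b/(2\pi)}$, we have $|K_\mu(t,\xi)|=\sqrt{b/(2\pi)}$ for all $t,\xi$. This gives the pointwise bound
\begin{equation*}
|W^\mu_f(\xi,\beta,\alpha)|\le \sqrt{\frac{b}{2\pi}}\int_{\mathbb R}|f(t)|\,\bigl|\psi^\mu_{\beta,\alpha}(t)\bigr|\,dt.
\end{equation*}

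Next, with the convention that $p$ and $q$ are conjugate, i.e.\ $\tfrac1p+\tfrac1q=1$, I would apply Hölder's inequality to the product $|f|\cdot|\psi^\mu_{\beta,\alpha}|$, which yields
\begin{equation*}
|W^\mu_f(\xi,\beta,\alpha)|\le \sqrt{\frac{b}{2\pi}}\,\|f\|_{L^q(\mathbb R)}\,\bigl\|\psi^\mu_{\beta,\alpha}\bigr\|_{L^p(\mathbb R)}.
\end{equation*}
It then remains only to compute the $L^p$-norm of the quadratic-phase wavelet.

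For that last step I would use the explicit form $\psi^\mu_{\beta,\alpha}(t)=\psi_\alpha(t-\beta)\,e^{-ia(t^2-\beta^2)-id(t-\beta)}$ from Definition \ref{def qpwpt}. The chirp factor $e^{-ia(t^2-\beta^2)-id(t-\beta)}$ is unimodular, so $|\psi^\mu_{\beta,\alpha}(t)|=|\psi_\alpha(t-\beta)|$ pointwise, and hence $\|\psi^\mu_{\beta,\alpha}\|_{L^p(\mathbb R)}=\|\psi_\alpha(\cdot-\beta)\|_{L^p(\mathbb R)}$. Invoking the scaling identity of Lemma \ref{lem11}, namely $\|\psi_\alpha(\cdot-\beta)\|_{L^p(\mathbb R)}=\alpha^{1/p-1/2}\|\psi\|_{L^p(\mathbb R)}$, and substituting into the previous display produces exactly the claimed inequality \eqref{eqn bdd}.

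There is no serious obstacle in this argument; the whole proof reduces to two unimodularity observations (for the QPFT kernel and for the chirp defining the QPW), a single application of Hölder, and the dilation–translation norm formula already recorded as Lemma \ref{lem11}. The only points requiring care are bookkeeping ones: confirming that the constant $\sqrt{b/(2\pi i)}$ contributes modulus $\sqrt{b/(2\pi)}$ (using $|i^{-1/2}|=1$), and reading the hypothesis $\tfrac1p+\tfrac1q$ in the statement as the conjugacy relation $\tfrac1p+\tfrac1q=1$, without which Hölder's inequality would not apply in the required form.
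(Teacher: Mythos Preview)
Your proof is correct and follows essentially the same approach as the paper: bound the unimodular phase factors, apply H\"older's inequality, and invoke Lemma~\ref{lem11} for the dilation norm. If anything, your version is tidier---the paper's intermediate inequality $\bigl|\int e^{i\phi}f\overline{\psi_\alpha}\,dt\bigr|\le\bigl|\int f\overline{\psi_\alpha}\,dt\bigr|$ is not literally valid, whereas your step of passing the absolute value inside the integral is the correct way to reach the same endpoint.
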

\begin{proof}
 From  (\ref{eqn2 qpwpt}), we have
 \begin{eqnarray*}
|W^\mu_f(\xi,\beta,\alpha)|&=&\left|\sqrt{\frac{b}{2\pi i}}\int_{\mathbb R}e^{i(at^2+bt\xi+c\xi^2+dt+e\xi)+ia(t^2-\beta^2)+id(t-\beta)}f(t)\overline{\psi_\alpha(t-\beta)}dt\right|\\\
&=&\sqrt{\frac{b}{2\pi }}\left|\int_{\mathbb R}e^{i(at^2+bt\xi+c\xi^2+dt+e\xi)+ia(t^2-\beta^2)+id(t-\beta)}f(t)\overline{\psi_\alpha(t-\beta)}dt\right|\\\
&&\le\sqrt{\frac{b}{2\pi }}\left|\int_{\mathbb R}f(t)\overline{\psi_\alpha(t-\beta)}dt\right|.\\\
\end{eqnarray*}
By Lemma \ref{lem11} and Holder's inequality, above yields
\begin{equation*}
|W^\mu_f(\xi,\beta,\alpha)|\le \alpha^{1/p-1/2}\sqrt{\frac{b}{2\pi}}\|\psi\|_{L^p(\mathbb R)}\|f\|_{L^q(\mathbb R)}
\end{equation*}
which completes the proof.
\end{proof}
\begin{theorem}[Boundedness] For $\psi,f\in\L^2(\mathbb R),$  the QP-WPT is bounded on $\L^2(\mathbb R)$.
\end{theorem}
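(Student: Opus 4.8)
The plan is to obtain the boundedness as an immediate corollary of Lemma \ref{lem bdd}, simply by specializing the conjugate exponents to the $L^2$ setting. In that lemma the estimate
\[
|W^\mu_f(\xi,\beta,\alpha)|\le \alpha^{1/p-1/2}\sqrt{\frac{b}{2\pi}}\,\|\psi\|_{L^p(\mathbb R)}\|f\|_{L^q(\mathbb R)}
\]
holds for H\"older-conjugate exponents $p,q\in[1,\infty)$. The only work is to choose $p=q=2$, which are indeed conjugate since $\tfrac1p+\tfrac1q=\tfrac12+\tfrac12=1$, and which places both $\psi$ and $f$ in $L^2(\mathbb R)$ exactly as the hypothesis of the present theorem demands.

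First I would substitute $p=q=2$ and observe that the scaling prefactor collapses: $\alpha^{1/p-1/2}=\alpha^{1/2-1/2}=\alpha^{0}=1$. Hence the dilation parameter $\alpha$ disappears from the bound altogether, and Lemma \ref{lem bdd} reduces to the pointwise estimate
\[
|W^\mu_f(\xi,\beta,\alpha)|\le \sqrt{\frac{b}{2\pi}}\,\|\psi\|_{L^2(\mathbb R)}\|f\|_{L^2(\mathbb R)}.
\]
Since the right-hand side is a finite constant independent of the three parameters $(\xi,\beta,\alpha)$, taking the supremum over all such parameters yields
\[
\big\|W^\mu_f\big\|_{\infty}\le \sqrt{\frac{b}{2\pi}}\,\|\psi\|_{L^2(\mathbb R)}\|f\|_{L^2(\mathbb R)},
\]
so the map $f\mapsto W^\mu_f$ is bounded on $L^2(\mathbb R)$ with operator bound $\sqrt{b/2\pi}\,\|\psi\|_{L^2(\mathbb R)}$.

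There is essentially no genuine obstacle in this argument: the entire content of the theorem is the verification that the exponent condition $\tfrac1p+\tfrac1q=1$ is satisfied by $p=q=2$ and that the resulting power of $\alpha$ is identically zero, so that the bound is \emph{uniform} in the dilation $\alpha$. If one instead reads ``bounded on $L^2(\mathbb R)$'' as a genuine $L^2\!\to\!L^2$ operator-norm statement rather than an $L^\infty$ bound on the transform, then the Moyal (Plancherel-type) formula established later would be invoked in place of the crude H\"older estimate; but for the stated sup-norm boundedness the specialization $p=q=2$ in Lemma \ref{lem bdd} is already decisive, and the result follows as a direct corollary.
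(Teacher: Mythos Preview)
Your proof is correct and follows essentially the same approach as the paper: both specialize Lemma~\ref{lem bdd} with $p=q=2$, obtaining the pointwise bound $|W^\mu_f(\xi,\beta,\alpha)|\le \sqrt{b/2\pi}\,\|\psi\|_{L^2}\|f\|_{L^2}$. Your additional remarks on the vanishing of the $\alpha$-exponent and the alternative Moyal-formula reading are helpful elaborations, but the core argument is identical to the paper's.
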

\begin{proof}
 By taking $p=q=2$ in  Lemma \ref{lem bdd},we have:
\begin{equation*}
|W^\mu_f(\xi,\beta,\alpha)|\le \sqrt{\frac{b}{2\pi}}\|\psi\|_{L^2(\mathbb R)}\|f\|_{L^2(\mathbb R)}
\end{equation*}
which shows that the QP-WPT is bounded on $L^2(\mathbb R)$.
\end{proof}
\begin{theorem} Let $\psi\in L^p(\mathbb R)$ and $f\in L^1(\mathbb R)\cap L^1(\mathbb R).$ Then we have
\begin{equation}
\|W^\mu_f(\xi,\beta,\alpha)\|_{L^p(\mathbb R)}\le \alpha^{(1/p-1/2)}\sqrt{\frac{b}{2\pi }}\|\psi\|_{L^p(\mathbb R)}\|f\|_{L^1(\mathbb R)}.
\end{equation}
\end{theorem}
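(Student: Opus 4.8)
The plan is to estimate the pointwise modulus of $W^\mu_f$ first and then take the $L^p$ norm in the translation variable $\beta$ (with $\xi$ and $\alpha$ held fixed, as in Lemma \ref{lem11}, against which the stated norm is most naturally read). Starting from the form (\ref{eqn2 qpwpt}) of the transform and noticing that the whole exponential factor there is of the form $e^{i(\text{real})}$ and hence has unit modulus, while $|\sqrt{b/(2\pi i)}|=\sqrt{b/(2\pi)}$, I would discard the phase exactly as in the first two lines of the proof of Lemma \ref{lem bdd} and write
\begin{equation*}
|W^\mu_f(\xi,\beta,\alpha)|\le\sqrt{\frac{b}{2\pi}}\int_{\mathbb R}|f(t)|\,|\psi_\alpha(t-\beta)|\,dt.
\end{equation*}
This reduces the problem to controlling the $L^p(d\beta)$ norm of a convolution-type integral in $\beta$.

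The key step is to interchange the $L^p$ norm with the $t$-integral via Minkowski's integral inequality, which is legitimate precisely because $p\ge 1$. This gives
\begin{equation*}
\big\|W^\mu_f(\xi,\cdot,\alpha)\big\|_{L^p(\mathbb R)}\le\sqrt{\frac{b}{2\pi}}\int_{\mathbb R}|f(t)|\,\big\|\psi_\alpha(t-\cdot)\big\|_{L^p(\mathbb R)}\,dt.
\end{equation*}
For each fixed $t$, the function $\beta\mapsto\psi_\alpha(t-\beta)$ is merely a reflection followed by a translation of $\psi_\alpha$, so its $L^p(d\beta)$ norm equals $\|\psi_\alpha\|_{L^p(\mathbb R)}$ by the translation- and reflection-invariance of the Lebesgue $L^p$ norm. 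Invoking Lemma \ref{lem11} then evaluates this constant as $\alpha^{(1/p-1/2)}\|\psi\|_{L^p(\mathbb R)}$, independent of $t$.

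Pulling this constant outside leaves the bare integral $\int_{\mathbb R}|f(t)|\,dt=\|f\|_{L^1(\mathbb R)}$, and assembling the factors yields the asserted inequality. The only point requiring genuine care is the application of Minkowski's integral inequality: one must confirm that $(t,\beta)\mapsto |f(t)|\,|\psi_\alpha(t-\beta)|$ is jointly measurable and that the iterated integrals are finite, both of which follow from $f\in L^1(\mathbb R)$, $\psi\in L^p(\mathbb R)$, and Lemma \ref{lem11}. I would also take the scaling exponent $\alpha^{1/p-1/2}$ verbatim from Lemma \ref{lem11} so that the normalization of $\psi_\alpha$ matches the one used there, rather than recomputing it from the definition in (\ref{eqn qpwpt}).
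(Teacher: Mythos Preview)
Your proposal is correct and follows essentially the same route as the paper: bound the modulus by discarding the unimodular phase, apply Minkowski's integral inequality to swap the $L^p(d\beta)$ norm with the $t$-integral, use translation invariance (the paper does this via the substitution $t-\beta=y$) to reduce to $\|\psi_\alpha\|_{L^p}$, and then invoke Lemma \ref{lem11}. Your added remarks on joint measurability and on reading the scaling exponent directly from Lemma \ref{lem11} are a bit more careful than the paper's presentation but do not change the argument.
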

\begin{proof}
By applying the Minkowski’s  inequality to (\ref{eqn2 qpwpt}), we obtain
\begin{eqnarray*}
\|W^\mu_f(\xi,\beta,\alpha)\|_{L^p(\mathbb R)}&&\\\
&&=\left(\int_{\mathbb R}\left|\sqrt{\frac{b}{2\pi i}}\int_{\mathbb R}e^{i(at^2+bt\xi+c\xi^2+dt+e\xi)+ia(t^2-\beta^2)+id(t-\beta)}f(t)\overline{\psi_\alpha(t-\beta)}dt\right|^pd\beta\right)^{1/p}\\\
&&\le\sqrt{\frac{b}{2\pi }}\int_{\mathbb R}\left(\int_{\mathbb R}\left|f(t)\overline{\psi_\alpha(t-\beta)}\right|^pd\beta\right)^{1/p}dt.
\end{eqnarray*}
Setting $t-\beta=y,$ we have

\begin{eqnarray*}
\|W^\mu_f(\xi,\beta,\alpha)\|_{L^p(\mathbb R)}&=&\sqrt{\frac{b}{2\pi }}\int_{\mathbb R}\left(\int_{\mathbb R}\left|f(t)\overline{\psi_\alpha(y)}\right|^pdy\right)^{1/p}dt\\\
&&\le\sqrt{\frac{b}{2\pi }}\int_{\mathbb R}\left(\int_{\mathbb R}\left|\overline{\psi_\alpha(y)}\right|^pdy\right)^{1/p}|f(t)|dt\\\
&&\le\sqrt{\frac{b}{2\pi }}\|\psi_\alpha\|_{L^p(\mathbb R)}\|f\|_{L^1(\mathbb R)}\\\
&&\le \alpha^{(1/p-1/2)}\sqrt{\frac{b}{2\pi }}\|\psi\|_{L^p(\mathbb R)}\|f\|_{L^1(\mathbb R)}.
\end{eqnarray*}
Which completes the proof.

\end{proof}

\begin{theorem}[Reconstruction theorem]\label{th recon} Every signal $f\in L^2(\mathbb R),$ can be reconstructed from QP-WPT by the formula
\begin{eqnarray}\label{recon eqn}
f(t)=\int_{\mathbb R}\int_{\mathbb R}W^\mu_f(\xi,\beta,\alpha)\psi^\mu_{\xi,\beta,\alpha}(t)d\xi d\beta.\\\
\end{eqnarray}
\end{theorem}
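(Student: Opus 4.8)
The plan is to read the right-hand side of \eqref{recon eqn} as a continuous superposition of the analyzing functions and to invert it by transferring the problem to the ordinary windowed Fourier transform, whose inversion formula \eqref{iwft} is already available. Writing the QP-WPT in its inner-product form \eqref{eqn2 qpwpt}, namely $W^\mu_f(\xi,\beta,\alpha)=\langle f,\psi^\mu_{\xi,\beta,\alpha}\rangle$, the integral in \eqref{recon eqn} is exactly $\int_{\mathbb R}\int_{\mathbb R}\langle f,\psi^\mu_{\xi,\beta,\alpha}\rangle\,\psi^\mu_{\xi,\beta,\alpha}(t)\,d\xi\,d\beta$, so proving the theorem amounts to showing that the family $\{\psi^\mu_{\xi,\beta,\alpha}\}$ furnishes a tight resolution of the identity with constant one.

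First I would use the relation \eqref{rel wft} to write $W^\mu_f(\xi,\beta,\alpha)=e^{i(c\xi^2+e\xi-a\beta^2-d\beta)}\,\mathcal G_{\psi_\alpha}[h](b\xi,\beta)$ with $h(t)=\sqrt{b/(2\pi i)}\,e^{i(2at^2+2dt)}f(t)$. Solving this for $\mathcal G_{\psi_\alpha}[h]$ and substituting into the WFT inversion \eqref{iwft} with window $\psi_\alpha$, I would change variables $\eta=b\xi$ in the frequency integral. By Lemma \ref{lem11} with $p=2$ one has $\|\psi_\alpha\|_{L^2}=\|\psi\|_{L^2}=1$, so the normalization $\langle\phi,\psi\rangle$ equals one and, after accounting for the Jacobian of the substitution, the constants collapse to the factor $\sqrt{b/(2\pi i)}$. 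The surviving phase $e^{i(c\xi^2+e\xi-a\beta^2-d\beta)}$, the synthesis exponential $e^{i\eta t}=e^{ib\xi t}$, and the chirp carried by $h$ recombine into precisely $\psi^\mu_{\xi,\beta,\alpha}(t)$; dividing out the unimodular-times-chirp factor that relates $h$ to $f$ then returns $f(t)$, which is \eqref{recon eqn}.

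Equivalently, and this makes the mechanism transparent, I would substitute \eqref{eqn2 qpwpt} directly into \eqref{recon eqn}, interchange the order of integration, and evaluate the reproducing kernel $\mathcal K(t,s)=\int_{\mathbb R}\int_{\mathbb R}\psi^\mu_{\xi,\beta,\alpha}(t)\,\overline{\psi^\mu_{\xi,\beta,\alpha}(s)}\,d\xi\,d\beta$. In the product the quadratic-in-$\xi$ phase and the pure $\beta$-phases cancel, leaving $b\xi(s-t)+2a(s^2-t^2)+2d(s-t)$; the $\xi$-integral then produces $\frac{2\pi}{|b|}\delta(s-t)$, on whose support the remaining chirp equals $1$, and the $\beta$-integral gives $\int_{\mathbb R}|\psi_\alpha(t-\beta)|^2\,d\beta=\|\psi\|_{L^2}^2=1$ by Lemma \ref{lem11}. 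Since $|\sqrt{b/(2\pi i)}|^2=b/(2\pi)$ cancels the $2\pi/|b|$, one obtains $\mathcal K(t,s)=\delta(t-s)$ and hence $\int_{\mathbb R}f(s)\,\delta(t-s)\,ds=f(t)$.

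The main obstacle is rigor rather than computation: the $\xi$-integral above is only conditionally convergent, so both the Fubini interchange and the passage to a Dirac delta are formal as written. I would make them precise either by testing \eqref{recon eqn} weakly against an arbitrary $g\in L^2(\mathbb R)$ and reducing it to the Parseval/orthogonality relation for the QP-WPT, or by inserting a Gaussian regularizer, establishing the identity first for Schwartz $f$, and extending to all of $L^2(\mathbb R)$ by density with the help of the boundedness estimate in Lemma \ref{lem bdd}. One must also pin down the implicit conventions $\|\psi\|_{L^2}=1$ and $b>0$ under which the stated constant-free formula holds, and reconcile the two normalizations of $\psi_\alpha$ that appear in Definition \ref{def qpwpt} and Lemma \ref{lem11}.
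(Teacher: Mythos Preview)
Your first approach---passing through the relation \eqref{rel wft} to the windowed Fourier transform and then invoking the WFT inversion \eqref{iwft}---is exactly the route the paper takes: it applies \eqref{iwft} to $h$, substitutes $\mathcal G_{\psi_\alpha}[h](b\xi,\beta)=e^{-i(c\xi^2+e\xi-a\beta^2-d\beta)}W^\mu_f(\xi,\beta,\alpha)$, and then recognizes the resulting integrand as $\psi^\mu_{\xi,\beta,\alpha}(t)\,W^\mu_f(\xi,\beta,\alpha)$ after dividing out the chirp relating $h$ to $f$; the paper likewise imposes $\langle\phi,\psi\rangle=1$ at the end to obtain the constant-free formula. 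Your second, direct kernel computation and your remarks on rigor (weak formulation, regularization, the $b>0$ and $\|\psi\|=1$ conventions, and the inconsistent normalizations of $\psi_\alpha$) go beyond what the paper does, which treats the argument purely formally.
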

\begin{proof}
Let $h(t),\psi,\phi\in L^2(\mathbb R).$ Assuming $\langle\phi,\psi\rangle\ne0$ and $\psi_\alpha$ as a windowed function then by the
inverse of the WFT (\ref{iwft}), we have
\begin{eqnarray*}
h(t)=\frac{b}{2\pi \langle\phi,\psi\rangle}\int_{\mathbb R}\int_{\mathbb R}\mathcal G_{\psi^\mu}[h](b\xi ,\beta)e^{-ibt\xi}\psi_\alpha(t-\beta)d\xi d\beta.
\end{eqnarray*}
By virtue of (\ref{rel wft}), we have from above equation
\begin{eqnarray}
\nonumber\sqrt{\frac{b}{2\pi i}}e^{i(2at^2+2dt)}f(t)&=&\frac{b}{2\pi \langle\phi,\psi\rangle}\int_{\mathbb R}\int_{\mathbb R}e^{-i(c\xi^2+e\xi-a\beta^2-d\beta)}e^{-ibt\xi}\\\
\nonumber&&\qquad\qquad\qquad\times\psi_\alpha(t-\beta)
W^\mu_f(\xi,\beta,\alpha)d\xi d\beta\\\
\nonumber f(t)&=&\sqrt{\frac{b i}{2\pi}}\frac{1}{\langle\phi,\psi\rangle}\int_{\mathbb R}\int_{\mathbb R}e^{-i(c\xi^2+e\xi-a\beta^2-d\beta+2dt+bt\xi+2at^2)}\\\
\nonumber&&\qquad\qquad\qquad\times\psi_\alpha(t-\beta)
W^\mu_f(\xi,\beta,\alpha)d\xi d\beta\\\
\nonumber&=&\frac{1}{\langle\phi,\psi\rangle}\int_{\mathbb R}\int_{\mathbb R}\psi^\mu_{\xi,\beta,\alpha}(t)W^\mu_f(\xi,\beta,\alpha)d\xi d\beta.\\\
\end{eqnarray}
For perfect reconstruction take $\langle\phi,\psi\rangle=1,$ above equation yields
\begin{eqnarray*}
f(t)=\int_{\mathbb R}\int_{\mathbb R}W^\mu_f(\xi,\beta,\alpha)\psi^\mu_{\xi,\beta,\alpha}(t)d\xi d\beta.\\\
\end{eqnarray*}
Which completes the proof.
\end{proof}
\begin{theorem}[Moyal's Formula]\label{thm ortho}Let $W^\mu_f(\xi,\beta,\alpha)$ and $W^\mu_g(\xi,\beta,\alpha)$ be the QP-WPT with respect to the wavelets $\psi$ and $\phi$ respectively, then
\begin{equation}
\langle W^\mu_f(\xi,\beta,\alpha),W^\mu_g(\xi,\beta,\alpha)\rangle_{L^2(\mathbb R^2)}=\overline{\langle \psi,\phi \rangle}_{L^2(\mathbb R)}\langle f,g \rangle_{L^2(\mathbb R)}
\end{equation}
\end{theorem}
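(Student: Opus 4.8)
The plan is to reduce Moyal's formula for the QP-WPT to the orthogonality relation for the ordinary windowed Fourier transform, using the identity recorded in the Proposition on the relation with the WFT, equation (\ref{rel wft}). Setting $h_f(t)=\sqrt{\frac{b}{2\pi i}}\,e^{i(2at^2+2dt)}f(t)$ and $h_g(t)=\sqrt{\frac{b}{2\pi i}}\,e^{i(2at^2+2dt)}g(t)$, that proposition expresses each transform as a unimodular phase times a WFT with the dilated window:
\begin{equation*}
W^\mu_f(\xi,\beta,\alpha)=e^{i(c\xi^2+e\xi-a\beta^2-d\beta)}\,\mathcal G_{\psi_\alpha}[h_f](b\xi,\beta),\qquad W^\mu_g(\xi,\beta,\alpha)=e^{i(c\xi^2+e\xi-a\beta^2-d\beta)}\,\mathcal G_{\phi_\alpha}[h_g](b\xi,\beta).
\end{equation*}
First I would form $\langle W^\mu_f,W^\mu_g\rangle_{L^2(\mathbb R^2)}=\int_{\mathbb R}\int_{\mathbb R}W^\mu_f\,\overline{W^\mu_g}\,d\xi\,d\beta$ and note that the two quadratic-phase prefactors are identical and of modulus one, so their product with the conjugate equals $1$ and they disappear, leaving an integral of $\mathcal G_{\psi_\alpha}[h_f](b\xi,\beta)\,\overline{\mathcal G_{\phi_\alpha}[h_g](b\xi,\beta)}$ over $(\xi,\beta)$.

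Next I would substitute $w=b\xi$, which contributes a Jacobian $1/b$ and turns the frequency slot into the standard WFT argument, so that the double integral becomes $\frac{1}{b}\langle\mathcal G_{\psi_\alpha}[h_f],\mathcal G_{\phi_\alpha}[h_g]\rangle_{L^2(\mathbb R^2)}$. At this stage I would apply the WFT orthogonality (Moyal) relation for two distinct windows, $\langle\mathcal G_{\psi_\alpha}[h_f],\mathcal G_{\phi_\alpha}[h_g]\rangle_{L^2(\mathbb R^2)}=2\pi\,\langle h_f,h_g\rangle_{L^2(\mathbb R)}\,\overline{\langle\psi_\alpha,\phi_\alpha\rangle}_{L^2(\mathbb R)}$, which I would justify by treating each $\beta$-slice as a constant multiple of a Fourier transform, invoking the Plancherel identity of (\ref{ft}) in the frequency variable, and then integrating out $\beta$ by Fubini. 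It then remains to evaluate the two inner products: in $\langle h_f,h_g\rangle$ the factors $e^{i(2at^2+2dt)}$ cancel against their conjugates and the constant gives $\bigl|\sqrt{b/(2\pi i)}\bigr|^2=b/(2\pi)$, whence $\langle h_f,h_g\rangle=\frac{b}{2\pi}\langle f,g\rangle$; and since the dilation is norm-preserving at the $L^2$ level (Lemma \ref{lem11} with $p=2$), polarization gives $\langle\psi_\alpha,\phi_\alpha\rangle=\langle\psi,\phi\rangle$. Collecting the constants, $\frac{1}{b}\cdot 2\pi\cdot\frac{b}{2\pi}=1$, and the claimed identity $\langle W^\mu_f,W^\mu_g\rangle_{L^2(\mathbb R^2)}=\overline{\langle\psi,\phi\rangle}_{L^2(\mathbb R)}\langle f,g\rangle_{L^2(\mathbb R)}$ follows.

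The step I expect to be the main obstacle is the exact bookkeeping of constants rather than any conceptual difficulty: one must simultaneously get right the modulus $\bigl|\sqrt{b/(2\pi i)}\bigr|^2=b/(2\pi)$, the Jacobian $1/b$ from $w=b\xi$, and the $2\pi$ emerging from the WFT Plancherel step, and verify that they cancel to give exactly $1$. A secondary issue is that the WFT Moyal relation for two different windows is invoked but not established earlier in the paper, so I would prove it inline via the Fourier--Plancherel argument above, and I would likewise need the dilation invariance $\langle\psi_\alpha,\phi_\alpha\rangle=\langle\psi,\phi\rangle$, which presupposes the $L^2$-normalized scaling convention. Finally I would take care with the placement of the complex conjugate throughout, so that it is $\overline{\langle\psi,\phi\rangle}$ and not $\langle\psi,\phi\rangle$ that survives in the conclusion.
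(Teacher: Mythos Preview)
Your argument is correct, and the constant bookkeeping you flag as the main obstacle does indeed cancel cleanly. However, this is a genuinely different route from the paper. The paper does not pass through the WFT relation \eqref{rel wft} at all; instead it expands $\langle W^\mu_f,W^\mu_g\rangle$ directly from Definition~\ref{def qpwpt}, computes the kernel orthogonality $\int_{\mathbb R}K_\mu(t,\xi)\overline{K_\mu(t',\xi)}\,d\xi=\frac{b}{2\pi}\int_{\mathbb R}e^{ib\xi(t-t')}\,d\xi=\delta(t-t')$, collapses the $t'$-integral against the delta, and then separates the remaining $t$- and $\beta$-integrals to obtain $\langle f,g\rangle\,\overline{\langle\psi,\phi\rangle}$. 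In effect, the paper reproves the WFT Moyal identity in situ rather than quoting it. Your approach is more modular---it exhibits Moyal for the QP-WPT as a corollary of the already-established relation \eqref{rel wft} plus the classical result---whereas the paper's direct computation is self-contained and avoids the extra Jacobian and normalization bookkeeping you identify (no $2\pi$ from a quoted WFT Moyal, no $b/(2\pi)$ from $\langle h_f,h_g\rangle$, no $1/b$ from the substitution). Either is fine; just be aware that the inline derivation of the two-window WFT Moyal relation you propose is essentially the same delta-function computation the paper performs anyway, so the net amount of work is comparable.
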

\begin{eqnarray*}
\langle W^\mu_f(\xi,\beta,\alpha),W^\mu_g(\xi,\beta,\alpha)\rangle&&\\\
&&=\int_{\mathbb R^2}W^\mu_f(\xi,\beta,\alpha)\overline{W^\mu_g(\xi,\beta,\alpha)}d\xi d\beta\\\
&&=\int_{\mathbb R^2}\left\{\int_{\mathbb R}f(t)\overline{\psi^\mu_{\beta,\alpha}(t)}K_\mu(t,\xi)dt\right.\\\
&&\qquad\qquad\qquad\times\left.\int_{\mathbb R}\overline{g(t')}\phi^\mu_{\beta,\alpha}(t')\overline{K_\mu(t',\xi)}dt'\right\}d\xi d\beta\\\
&&=\int_{\mathbb R^2}\int_{\mathbb R}f(t)\overline{g(t')}\phi^\mu_{\beta,\alpha}(t')\overline{\psi^\mu_{\beta,\alpha}(t)}\int_{\mathbb R}K_\mu(t,\xi)\overline{K_\mu(t',\xi)}d\xi dtdt'd\beta\\\
&&=\int_{\mathbb R^2}\int_{\mathbb R}f(t)\overline{g(t')}\phi_{\alpha}(t'-\beta)\overline{\psi_{\alpha}(t-\beta)}\frac{b}{2\pi}\int_{\mathbb R}e^{ib\xi(t-t')}d\xi dtdt'd\beta\\\
&&=\int_{\mathbb R^2}\int_{\mathbb R}f(t)\overline{g(t')}\phi_{\alpha}(t'-\beta)\overline{\psi_{\alpha}(t-\beta)}\delta(t-t')dt'dtd\beta\\\
&&=\int_{\mathbb R}\int_{\mathbb R}f(t)\overline{g(t)}\phi_{\alpha}(t-\beta)\overline{\psi_{\alpha}(t-\beta)}dtd\beta\\\
&&=\int_{\mathbb R}f(t)\overline{g(t)}dt\int_{\mathbb R}\frac{1}{\alpha}\phi\left(\frac{t-\beta}{\alpha}\right)\overline\psi\left(\frac{t-\beta}{\alpha}\right)d\beta\\\
&&=\overline{\langle \psi,\phi \rangle}_{L^2(\mathbb R)}\langle f,g \rangle_{L^2(\mathbb R)}.
\end{eqnarray*}
 Which completes the proof.

 Consequences of the Theorem \ref{thm ortho}:
\begin{itemize}
\item If $\psi=\phi$, then \begin{equation}\langle W^\mu_f(\xi,\beta,\alpha),W^\mu_g(\xi,\beta,\alpha)\rangle_{L^2(\mathbb R^2)}=\|\psi\|^2_{L^2(\mathbb R)}\langle f,g \rangle_{L^2(\mathbb R)}.\end{equation}
  \item If $\psi=\phi$,and  $f=g$ then
   \begin{equation}\label{con 2}\langle W^\mu_f(\xi,\beta,\alpha),W^\mu_g(\xi,\beta,\alpha)\rangle_{L^2(\mathbb R^2)}=\|\psi\|^2_{L^2(\mathbb R)}\|f\|^2_{L^2(\mathbb R)}.\end{equation}
 \item If $\psi=\phi=1$,and  $f=g$ then \begin{equation}\label{energy}\langle W^\mu_f(\xi,\beta,\alpha),W^\mu_g(\xi,\beta,\alpha)\rangle_{L^2(\mathbb R^2)}=\|f\|^2_{L^2(\mathbb R)}.\end{equation}
\end{itemize}
\begin{remark}[Energy conservation] Equation (\ref{energy}) yields the conservation of energy for the QP-WPT
\begin{equation}\label{energy2}
\int_{\mathbb R^2}\left| W^\mu_f(\xi,\beta,\alpha)\right|^2d\xi d\beta=\int_{\mathbb R}\left|f(t)\right|^2 dt.
\end{equation}
\end{remark}
\begin{theorem}[Reproducing kernel]\label{thm rep ker}
Let $(\xi_0,\beta_0,\alpha)$ be any point on the plane of $(\xi,\beta,\alpha),$ the necessary and sufficient condition that the function $ W^\mu_f(\xi,\beta,\alpha)$ is the QP-WPT of some function is that $ W^\mu_f(\xi,\beta,\alpha)$ must satisfy the following reproducing kernel formula
\begin{equation}\label{eqn rep ker}
 W^\mu_f(\xi,\beta,\alpha)=\int_{\mathbb R}\int_{\mathbb R}W^\mu_f(\xi,\beta,\alpha)\mathbb K_{\psi^\mu}(\xi,\beta,\alpha:\xi_0,\beta_0,\alpha)d\xi d\beta \\
\end{equation}
where  $W^\mu_f(\xi_0,\beta_0,\alpha)$ is value of function $ W^\mu_f(\xi,\beta,\alpha)$ at $(\xi_0,\beta_0,\alpha),$ and $\mathbb K_{\psi^\mu}(\xi,\beta,\alpha:\xi_0,\beta_0,\alpha)$ is called the reproducing kernel given by
\begin{equation}\label{r p ker}
\mathbb K_{\psi^\mu}(\xi,\beta,\alpha:\xi_0,\beta_0,\alpha)=\langle \psi^\mu_{\xi,\beta,\alpha},\psi^\mu_{\xi_0,\beta_0,\alpha}\rangle
\end{equation}
\end{theorem}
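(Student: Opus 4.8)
The plan is to obtain the reproducing-kernel identity directly by combining the inner-product form of the transform in \eqref{eqn2 qpwpt} with the reconstruction formula of Theorem \ref{th recon}. I read the left-hand side of \eqref{eqn rep ker} as the value $W^\mu_f(\xi_0,\beta_0,\alpha)$ at the fixed point, which is the reading consistent with the reproducing property and with the description of the kernel in \eqref{r p ker}.

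For the necessity direction, I would first rewrite the transform at the distinguished point as
\[
W^\mu_f(\xi_0,\beta_0,\alpha)=\langle f,\psi^\mu_{\xi_0,\beta_0,\alpha}\rangle=\int_{\mathbb R}f(t)\,\overline{\psi^\mu_{\xi_0,\beta_0,\alpha}(t)}\,dt,
\]
using \eqref{eqn2 qpwpt}. I would then insert the reconstruction formula \eqref{recon eqn} for $f(t)$ into this integral, producing an iterated integral in $t$ and in $(\xi,\beta)$. The crucial step is to interchange the $t$-integration with the $(\xi,\beta)$-integration by Fubini's theorem, after which the inner $t$-integral becomes exactly
\[
\int_{\mathbb R}\psi^\mu_{\xi,\beta,\alpha}(t)\,\overline{\psi^\mu_{\xi_0,\beta_0,\alpha}(t)}\,dt=\langle \psi^\mu_{\xi,\beta,\alpha},\psi^\mu_{\xi_0,\beta_0,\alpha}\rangle=\mathbb K_{\psi^\mu}(\xi,\beta,\alpha:\xi_0,\beta_0,\alpha),
\]
which is precisely the kernel of \eqref{r p ker}. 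Substituting this back yields \eqref{eqn rep ker} and shows that every transform $W^\mu_f$ with $f\in L^2(\mathbb R)$ satisfies the reproducing formula.

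For the sufficiency direction I would prove the converse: given a function $F(\xi,\beta,\alpha)\in L^2(\mathbb R^2)$ that satisfies the reproducing identity, define a candidate signal $f$ by the reconstruction integral \eqref{recon eqn} with $F$ substituted for $W^\mu_f$, and then verify, by the same Fubini interchange together with the Moyal relation of Theorem \ref{thm ortho}, that $W^\mu_f=F$; hence $F$ lies in the range of the QP-WPT. The main obstacle I expect is the rigorous justification of the Fubini interchange, which demands absolute integrability of the joint integrand. This can be secured from the $L^2(\mathbb R^2)$-membership of $W^\mu_f$ guaranteed by the energy identity \eqref{energy2}, the pointwise bound of Lemma \ref{lem bdd}, and the $L^2$-norm control of the analysing functions $\psi^\mu_{\xi,\beta,\alpha}$ supplied by Lemma \ref{lem11}. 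Once integrability is established the remaining manipulations are routine.
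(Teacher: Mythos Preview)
Your necessity argument is exactly the paper's: write $W^\mu_f(\xi_0,\beta_0,\alpha)=\langle f,\psi^\mu_{\xi_0,\beta_0,\alpha}\rangle$ via \eqref{eqn2 qpwpt}, substitute the reconstruction formula \eqref{recon eqn} for $f$, swap the order of integration, and identify the inner $t$-integral with the kernel \eqref{r p ker}. The paper performs these steps without commenting on the Fubini interchange, so your proposal is at least as complete on that direction.

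Where you differ is in actually addressing the sufficiency direction. The paper's proof, despite the ``necessary and sufficient'' wording of the statement, establishes only the forward implication; your sketch of the converse (define $f$ from $F$ by the synthesis integral and check $W^\mu_f=F$ using Moyal) is additional content not present in the paper. That said, your proposed verification of $W^\mu_f=F$ via Theorem~\ref{thm ortho} is not quite right as stated: Moyal's formula compares two transforms $W^\mu_f$ and $W^\mu_g$, not a transform against an arbitrary $F$, so what you really need is to compute $W^\mu_f(\xi_0,\beta_0,\alpha)=\langle f,\psi^\mu_{\xi_0,\beta_0,\alpha}\rangle$ directly from the definition of $f$, interchange integrals, and invoke the assumed reproducing identity for $F$---this is closer to rerunning the necessity computation than to applying Moyal.
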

\begin{proof}
From (\ref{eqn2 qpwpt})and (\ref{recon eqn}), we have
\begin{eqnarray*}
W^\mu_f(\xi,\beta,\alpha)&&\\\
&&=\int_{\mathbb R}f(t)\overline{\psi^\mu_{\xi,\beta,\alpha}}(t)dt\\\
&&=\int_{\mathbb R}\left\{\int_{\mathbb R}\int_{\mathbb R}W^\mu_f(\xi,\beta,\alpha)\psi^\mu_{\xi,\beta,\alpha}(t)d\xi d\beta \right\}\overline{\psi^\mu_{\xi,\beta,\alpha}}(t)dt.\\\
\end{eqnarray*}
Setting $(\xi,\beta,\alpha)=(\xi_0,\beta_0,\alpha),$ we have
\begin{eqnarray*}
W^\mu_f(\xi_0,\beta_0,\alpha)&&\\\
&&=\int_{\mathbb R}\left\{\int_{\mathbb R}\int_{\mathbb R}W^\mu_f(\xi,\beta,\alpha)\psi^\mu_{\xi,\beta,\alpha}(t)d\xi d\beta \right\}\overline{\psi^\mu_{\xi_0,\beta_0,\alpha}}(t)dt\\\
&&=\int_{\mathbb R}\int_{\mathbb R}\int_{\mathbb R}W^\mu_f(\xi,\beta,\alpha)\psi^\mu_{\xi,\beta,\alpha}(t)\overline{\psi^\mu_{\xi_0,\beta_0,\alpha}}(t)dtd\xi d\beta \\\
&&=\int_{\mathbb R}\int_{\mathbb R}W^\mu_f(\xi,\beta,\alpha)\left\{\int_{\mathbb R}\psi^\mu_{\xi,\beta,\alpha}(t)\overline{\psi^\mu_{\xi_0,\beta_0,\alpha}}(t)dt\right\}d\xi d\beta \\\
&&=\int_{\mathbb R}\int_{\mathbb R}W^\mu_f(\xi,\beta,\alpha)\mathbb K_{\psi^\mu}(\xi,\beta,\alpha:\xi_0,\beta_0,\alpha)d\xi d\beta \\\
\end{eqnarray*}
Which completes the proof.
\end{proof}
\section{\bf Uncertainty Principle's for the QP-WPT}\label{sec 4}
  Uncertainty principle has applications in two main areas:
 harmonic analysis and signal analysis. This principle in harmonic analysis stems from the uncertainty
principle in quantum mechanics, which tells that
a particle’s velocity and position cannot be measured with infinite precision. In signal
analysis, it tells that if one observes a signal only for a finite time, then the knowledge
about the frequencies consisted by the signal is lost.
In this section, we first prove QP-WPT Lieb’s uncertainty principle by
considering the  relationship between the WFT and QP-WPT. Then we will obtain
a logarithmic uncertainty principle associated with the QP-WPT by using the relation fundamental  between FT and QP-WPT. Finally, we wil establish
a generalization of the Heisenberg type uncertainty principle for the QP-WPT.

\begin{theorem}[Leib's uncertainty principle]For $\psi,f\in L^2(\mathbb R)$ and $2\le p <\infty,$ the following inequality holds:
\begin{eqnarray}\label{eqn leibs}
\int_{\mathbb R}\int_{\mathbb R}\left|W^\mu_f(\xi,\beta,\alpha)\right|^pd\xi d\beta&\le&\frac{2}{p}(M_\mu)^p\left(\|f\|_2\|\psi\|_2\right)^p
\end{eqnarray}
where $(M_\mu)=(2\pi)^{\frac{-1}{2}}|b|^{\frac{1}{2}-\frac{1}{p}}$
\end{theorem}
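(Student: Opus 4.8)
The plan is to reduce the inequality to the classical Lieb uncertainty principle for the windowed Fourier transform, exploiting the relation (\ref{rel wft}) that expresses the QP-WPT as a WFT up to a unimodular factor and a dilation of the frequency variable.

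First I would recall from (\ref{rel wft}) that
$$W^\mu_f(\xi,\beta,\alpha)=e^{i(c\xi^2+e\xi-a\beta^2-d\beta)}\,\mathcal G_{\psi^\mu}[h](b\xi,\beta),\qquad h(t)=\sqrt{\tfrac{b}{2\pi i}}\,e^{i(2at^2+2dt)}f(t).$$
Because the leading exponential has modulus one, $|W^\mu_f(\xi,\beta,\alpha)|=|\mathcal G_{\psi^\mu}[h](b\xi,\beta)|$. Raising this to the power $p$, integrating over $(\xi,\beta)\in\mathbb R^2$, and substituting $\omega=b\xi$ (so that $d\xi=|b|^{-1}d\omega$) gives
$$\int_{\mathbb R}\int_{\mathbb R}\left|W^\mu_f(\xi,\beta,\alpha)\right|^p d\xi\,d\beta=\frac{1}{|b|}\int_{\mathbb R}\int_{\mathbb R}\left|\mathcal G_{\psi^\mu}[h](\omega,\beta)\right|^p d\omega\,d\beta.$$

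Next I would invoke the classical Lieb inequality for the WFT, valid exactly for $2\le p<\infty$, namely $\int\int|\mathcal G_\phi[u](\omega,\beta)|^p\,d\omega\,d\beta\le\frac{2}{p}\left(\|u\|_2\|\phi\|_2\right)^p$, applied with $u=h$ and window $\phi=\psi_\alpha$. It then remains to compute the two $L^2$-norms. Since $\left|\sqrt{b/(2\pi i)}\right|=\sqrt{|b|/(2\pi)}$ and the quadratic-phase factor is unimodular, $\|h\|_2=\sqrt{|b|/(2\pi)}\,\|f\|_2$; and by Lemma \ref{lem11} with $p=2$ (equivalently Lemma \ref{psi}) one has $\|\psi_\alpha\|_2=\|\psi\|_2$. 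Substituting and collecting the powers of $|b|$ and $2\pi$ yields
$$\frac{1}{|b|}\cdot\frac{2}{p}\left(\sqrt{\tfrac{|b|}{2\pi}}\,\|f\|_2\|\psi\|_2\right)^p=\frac{2}{p}\,(2\pi)^{-p/2}|b|^{p/2-1}\left(\|f\|_2\|\psi\|_2\right)^p,$$
which is precisely $\frac{2}{p}(M_\mu)^p(\|f\|_2\|\psi\|_2)^p$ with $M_\mu=(2\pi)^{-1/2}|b|^{1/2-1/p}$, as claimed.

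The change of variables and the norm bookkeeping are routine; the single substantive ingredient is the classical Lieb inequality for the WFT, and its failure (indeed, reversal) for $1\le p<2$ is exactly what forces the hypothesis $2\le p<\infty$. The main point requiring care is to ensure that the normalization of the WFT in the cited Lieb inequality matches the convention (\ref{wft}) used here, so that the constant $M_\mu$ emerges with no stray factor of $2\pi$.
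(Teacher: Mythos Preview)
Your proof is correct and follows essentially the same route as the paper: you use the relation (\ref{rel wft}) to write $|W^\mu_f(\xi,\beta,\alpha)|=|\mathcal G_{\psi_\alpha}[h](b\xi,\beta)|$, change variables $\omega=b\xi$, apply the classical Lieb inequality for the WFT, and then compute $\|h\|_2=\sqrt{|b|/(2\pi)}\,\|f\|_2$ and $\|\psi_\alpha\|_2=\|\psi\|_2$ to obtain the constant $M_\mu$. Your bookkeeping of the constants is in fact cleaner than the paper's, and your remark about matching the WFT normalization in Lieb's inequality is a worthwhile caution.
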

\begin{proof}
The Lieb's uncertainty principle for the windowed Fourier transform \cite{x39,x38}
reads
\begin{equation}\label{l1}
\int_{\mathbb R}\int_{\mathbb R}\left|\mathcal G_{\psi}[f](\xi,\beta)\right|^pd\xi d\beta\le\frac{2}{p}\left(\|f\|_2\|\psi\|_2\right)^p
\end{equation}
for all $f,\psi\in L^2(\mathbb R)$ and $2\le p <\infty.$

For $f\in L^2(\mathbb R)$ we have function $h(t)=\sqrt{\frac{b}{2\pi i}}e^{i(2at^2+2dt)}f(t)\in L^2(\mathbb R)$ , therefore we can replace $f$ in (\ref{l1})by $h$ as:
\begin{eqnarray}
\nonumber\int_{\mathbb R}\int_{\mathbb R}\left|\mathcal G_{\psi^\mu_{\alpha}}[h](\xi,\beta)\right|^pd\xi d\beta &\le&\frac{2}{p}\left(\|h\|_2\|\psi^\mu_{\alpha}\|_2\right)^p\\
\label{l2}&=&\frac{2}{p}\left(\left(\int_{\mathbb R}\left|\sqrt{\frac{b}{2\pi i}}e^{i(2at^2+2dt)}f(t)\right|^2dt \right)^{\frac{1}{2}}\|\psi_{\alpha}\|_2\right)^p.
\end{eqnarray}
Substituting $\xi=b\xi$ in (\ref{l2}), we have
\begin{equation}\label{l3}
\int_{\mathbb R}\int_{\mathbb R}|b|\left|\mathcal G_{\psi^\mu_{\alpha}}[h](b\xi,\beta)\right|^pd\xi d\beta\le\frac{2}{p}\left(\frac{b}{2\pi}\right)^{\frac{p}{2}}\left(\left(\int_{\mathbb R}\left|f(t)\right|^2dt \right)^{\frac{1}{2}}\|\psi_{\alpha}\|_2\right)^p.
\end{equation}
Using (\ref{rel wft}) in (\ref{l3})
\begin{equation}\label{l4}
\int_{\mathbb R}\int_{\mathbb R}\left|e^{-i(c\xi^2+e\xi-a\beta^2-d\beta)}W^\mu_f(\xi,\beta,\alpha)\right|^pd\xi d\beta\le\frac{2}{p|b|}\left(\frac{b}{2\pi}\right)^{\frac{p}{2}}\left(\left(\int_{\mathbb R}\left|f(t)\right|^2dt \right)^{\frac{1}{2}}\|\psi_{\alpha}\|_2\right)^p.
\end{equation}
On further simplifying (\ref{l4}) and  using lemma \ref{psi}, we have
\begin{eqnarray*}\label{l5}
\int_{\mathbb R}\int_{\mathbb R}\left|W^\mu_f(\xi,\beta,\alpha)\right|^pd\xi d\beta&\le&\frac{2}{p|b|}\left(\frac{b}{2\pi}\right)^{\frac{p}{2}}\left(\|f\|_2\|\psi\|_2\right)^p\\\
&=&\frac{2}{p}\left(\frac{1}{|b|^{\frac{1}{p}}}\right)^p\left(\frac{|b|^{\frac{1}{2}}}{(2\pi)^{\frac{1}{2}}}\right)^p\left(\|f\|_2\|\psi\|_2\right)^p
\end{eqnarray*}
which completes the proof.
\end{proof}
\begin{lemma}[Relation between QP-WPT and FT]\label{rel ft}
We have from (\ref{eqn2 qpwpt})
\begin{eqnarray}
\nonumber W^\mu_f(\xi,\beta,\alpha)&&\\\
\nonumber&&=\sqrt{\frac{b}{2\pi i}}\int_{\mathbb R}e^{i(at^2+bt\xi+c\xi^2+dt+e\xi)}f(t)\overline{\psi^\mu_{\beta,\alpha}}dt\\\
\nonumber&&=\sqrt{\frac{b}{2\pi i}}\int_{\mathbb R}e^{i(at^2+bt\xi+c\xi^2+dt+e\xi)+ia(t^2-\beta^2)+id(t-\beta)}f(t)\overline{\psi_\alpha(t-\beta)}dt\\\
\nonumber&&=e^{i(c\xi^2+e\xi-a\beta^2-d\beta)}\sqrt{\frac{b}{2\pi i}}\int_{\mathbb R}e^{i(2at^2+bt\xi+2dt)}f(t)\overline{\psi_\alpha(t-\beta)}dt\\\
\nonumber&&=e^{i(c\xi^2+e\xi-a\beta^2-d\beta)}\sqrt{\frac{b}{2\pi i}}\int_{\mathbb R}e^{i(2at^2+2dt)}e^{ib\xi t}f(t)\overline{\psi_\alpha(t-\beta)}dt\\\
\nonumber&&=e^{i(c\xi^2+e\xi-a\beta^2-d\beta)}\sqrt{\frac{b}{ i}}\mathcal F[g](b\xi)\\\
\end{eqnarray}
where \begin{equation}\label{g}
g(t)= e^{i(2at^2+2dt)}f(t)\psi_\alpha(t-\beta).\end{equation}
\end{lemma}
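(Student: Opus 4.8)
The plan is to verify the displayed chain of equalities by direct computation, starting from the second form of the QP-WPT recorded in (\ref{eqn2 qpwpt}), in which the explicit expression for the quadratic-phase wavelet $\psi^\mu_{\beta,\alpha}$ has already been substituted. The only analytic input needed is the definition (\ref{ft}) of the Fourier transform; everything else is algebraic manipulation of a single quadratic phase.

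First I would collect the full exponent appearing under the integral. Adding the kernel phase $at^2+bt\xi+c\xi^2+dt+e\xi$ to the wavelet phase $a(t^2-\beta^2)+d(t-\beta)$ produces $2at^2+bt\xi+2dt+c\xi^2+e\xi-a\beta^2-d\beta$. I would then split this exponent into the part that does not involve the integration variable $t$, namely $c\xi^2+e\xi-a\beta^2-d\beta$, and the part that does, namely $2at^2+bt\xi+2dt$. The $t$-free part factors out of the integral as the prefactor $e^{i(c\xi^2+e\xi-a\beta^2-d\beta)}$, which is exactly the modulation appearing from the third line of the chain onward.

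Next I would isolate the linear-in-$t$ oscillation $e^{ib\xi t}$ from the remaining phase, leaving $e^{i(2at^2+2dt)}$ attached to $f(t)\overline{\psi_\alpha(t-\beta)}$. Writing $g(t)=e^{i(2at^2+2dt)}f(t)\psi_\alpha(t-\beta)$, the surviving integral $\int_{\mathbb R} g(t)\,e^{ib\xi t}\,dt$ is recognised, via (\ref{ft}), as $\sqrt{2\pi}$ times the Fourier transform of $g$ evaluated at the frequency $b\xi$. Collecting the normalising constants then gives $\sqrt{\frac{b}{2\pi i}}\cdot\sqrt{2\pi}=\sqrt{\frac{b}{i}}$, which is precisely the constant in the final line.

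The one point requiring care, and the only place where an error could slip in, is the sign and conjugation bookkeeping in the last step: the Fourier transform in (\ref{ft}) carries the kernel $e^{-it\xi}$, whereas the factor produced here is $e^{+ib\xi t}$, so one must be consistent about whether the transform is read at $b\xi$ or at $-b\xi$, and about whether $\psi_\alpha$ enters $g$ conjugated or not. Aligning these conventions with the stated definition of $g$ is the entire content of the identification; once the exponent has been split as above, each of the remaining equalities follows immediately.
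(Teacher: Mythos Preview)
Your proposal is correct and follows exactly the same route as the paper: the lemma in the paper is itself stated as a chain of equalities obtained by expanding $\overline{\psi^\mu_{\beta,\alpha}}$, collecting the $t$-independent phase, and recognising the remaining integral as a scaled Fourier transform of $g$. Your flagging of the sign/conjugation issue in the last identification is apt --- the paper's stated $g$ drops the conjugate on $\psi_\alpha$ and the kernel sign does not match (\ref{ft}) literally, so the final equality holds only up to these conventions (or under the tacit assumption that $\psi$ is real and one reads $\mathcal F[g](-b\xi)$); this is a wrinkle in the paper's own statement rather than in your argument.
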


\begin{theorem}[Logarithmic uncertainty principle] Let $\psi\in L^2(\mathbb R)$ and $W^\mu_f(\xi,\beta,\alpha)$ be the QP-WPT of $f \in \mathcal S(\mathbb R)$ [Schwartz space]. Then, the following logarithmic inequality holds:
\begin{eqnarray*}\label{log eqn}
\nonumber \|\psi\|^2\int_{\mathbb R}\ln|t||f(t)|^2dt+\int_{\mathbb R}\int_{\mathbb R}\ln|\xi|\left|W^\mu_f(\xi,\beta,\alpha)\right|^2d\xi d\beta&&\\\
\nonumber \ge \left[\frac{\Gamma'(1/4)}{\Gamma(1/4)}-\ln\pi-\ln|b|\right]\|f\|^2\|\psi\|^2\\\
\end{eqnarray*}
\end{theorem}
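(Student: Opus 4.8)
The plan is to transport the classical logarithmic (Beckner) uncertainty principle for the Fourier transform onto the QP-WPT by means of the intertwining identity of Lemma \ref{rel ft}. For the normalization $\mathcal F$ of (\ref{ft}), Beckner's inequality states that
\[
\int_{\mathbb R}\ln|t|\,|\varphi(t)|^2\,dt+\int_{\mathbb R}\ln|\eta|\,|\mathcal F[\varphi](\eta)|^2\,d\eta\ge\left(\frac{\Gamma'(1/4)}{\Gamma(1/4)}-\ln\pi\right)\|\varphi\|^2
\]
for every $\varphi\in\mathcal S(\mathbb R)$. First I would apply this, for each fixed $\beta$, to the auxiliary function $g(t)=e^{i(2at^2+2dt)}f(t)\psi_\alpha(t-\beta)$ introduced in (\ref{g}). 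Since the unimodular chirp drops out, $|g(t)|=|f(t)|\,|\psi_\alpha(t-\beta)|$, so the amplitude integral and the right-hand side become weighted integrals of $|f|^2$ against $|\psi_\alpha(t-\beta)|^2$, with constant $C=\Gamma'(1/4)/\Gamma(1/4)-\ln\pi$.

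The crucial step is to re-express the frequency integral of $g$ through the QP-WPT. By Lemma \ref{rel ft} one has $|\mathcal F[g](b\xi)|^2=|b|^{-1}\,|W^\mu_f(\xi,\beta,\alpha)|^2$, so the substitution $\eta=b\xi$, under which $d\eta=|b|\,d\xi$ and $\ln|\eta|=\ln|b|+\ln|\xi|$, converts $\int_{\mathbb R}\ln|\eta|\,|\mathcal F[g](\eta)|^2\,d\eta$ into $\int_{\mathbb R}(\ln|b|+\ln|\xi|)\,|W^\mu_f(\xi,\beta,\alpha)|^2\,d\xi$. This is exactly where the additive $-\ln|b|$ correction in the target constant is generated.

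Next I would integrate the resulting pointwise-in-$\beta$ inequality over $\beta\in\mathbb R$ and apply Fubini's theorem (legitimate because $f\in\mathcal S(\mathbb R)$ makes every logarithmically weighted integral absolutely convergent) to interchange the $t$- and $\beta$-integrations in the amplitude terms. Using the scaling identity $\int_{\mathbb R}|\psi_\alpha(t-\beta)|^2\,d\beta=\|\psi\|^2$ (Lemma \ref{lem11} with $p=2$) collapses $\int_{\mathbb R}\int_{\mathbb R}\ln|t|\,|f(t)|^2|\psi_\alpha(t-\beta)|^2\,dt\,d\beta$ into $\|\psi\|^2\int_{\mathbb R}\ln|t|\,|f(t)|^2\,dt$ and the right-hand side into $C\,\|\psi\|^2\|f\|^2$. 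For the $\ln|b|$ portion of the frequency term, the Moyal consequence (\ref{con 2}) gives $\int_{\mathbb R^2}|W^\mu_f(\xi,\beta,\alpha)|^2\,d\xi\,d\beta=\|\psi\|^2\|f\|^2$, so that piece contributes precisely $\ln|b|\,\|\psi\|^2\|f\|^2$; transposing it to the right produces the stated constant $\Gamma'(1/4)/\Gamma(1/4)-\ln\pi-\ln|b|$.

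I expect the main obstacle to be technical rather than conceptual: namely, verifying that the change of variables $\eta=b\xi$ and the interchange of the $t$- and $\beta$-integrations are fully justified, which is why the hypothesis $f\in\mathcal S(\mathbb R)$ is imposed. No genuinely new inequality is required beyond Beckner's; the entire content of the theorem lies in correctly propagating the normalizing factors $\sqrt{b/i}$ and $b\xi$ from Lemma \ref{rel ft} and the $\beta$-averaging of $|\psi_\alpha|^2$ through the calculation.
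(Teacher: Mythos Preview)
Your proposal is correct and follows essentially the same route as the paper's proof: apply Beckner's logarithmic uncertainty principle to the auxiliary function $g$ of (\ref{g}) for each fixed $\beta$, use Lemma~\ref{rel ft} together with the substitution $\eta=b\xi$ to convert the frequency side into an integral of $|W^\mu_f|^2$, then integrate in $\beta$ and invoke the Moyal identity (\ref{con 2}) to extract the factors $\|\psi\|^2\|f\|^2$ and produce the $-\ln|b|$ correction. The only difference is that you are more explicit about the justification of Fubini and the change of variables, which the paper leaves implicit.
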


\begin{proof}
For any $f\in \mathcal S(\mathbb R)$(Schwartz space in $L^2(\mathbb R),$) the logarithmic uncertainty principle for the classical Fourier transform reads\cite{fs2}
\begin{equation}\label{log1}
\int_{\mathbb R}\ln|t||f(t)|^2dt+\int_{\mathbb R}\ln|\xi||\mathcal F[f](\xi)|^2d\xi\ge\left[\frac{\Gamma'(1/4)}{\Gamma(1/4)}-\ln\pi\right]\int_{\mathbb R}|f(t)|^2dt.
\end{equation}
As  $f\in \mathcal S(\mathbb R),$ then it is evident that function $g$ given in (\ref{g}) belongs to the Schwartz space $\mathcal S(\mathbb R).$  Therefore we can replace $f$ in (\ref{log1}) by $g$ as:
\begin{equation}\label{log2}
\int_{\mathbb R}\ln|t||g(t)|^2dt+\int_{\mathbb R}\ln|\xi||\mathcal F[g](\xi)|^2d\xi\ge\left[\frac{\Gamma'(1/4)}{\Gamma(1/4)}-\ln\pi\right]\int_{\mathbb R}|g(t)|^2dt.
\end{equation}
Changing $\xi$ by $b\xi$, we obtain from (\ref{log2})
\begin{equation}\label{log3}
\int_{\mathbb R}\ln|t||g(t)|^2dt+b\int_{\mathbb R}\ln|b\xi||\mathcal F[g](b\xi)|^2d\xi\ge\left[\frac{\Gamma'(1/4)}{\Gamma(1/4)}-\ln\pi\right]\int_{\mathbb R}|g(t)|^2dt.
\end{equation}

Applying Lemma \ref{rel ft} and (\ref{g}) to (\ref{log3}), we obtain
\begin{eqnarray}\label{log4}
\nonumber\int_{\mathbb R}\ln|t||f(t)\psi_\alpha(t-\beta)|^2dt+b\int_{\mathbb R}(\ln|b|+\ln|\xi|)\left|\sqrt{\frac{i}{b}}e^{-i(c\xi^2+e\xi-a\beta^2-d\beta)}W^\mu_f(\xi,\beta,\alpha)\right|^2d\xi&&\\\
\nonumber\ge\left[\frac{\Gamma'(1/4)}{\Gamma(1/4)}-\ln\pi\right]\int_{\mathbb R}|f(t)\psi_\alpha(t-\beta)|^2dt.\\\
\end{eqnarray}
On further simplifying (\ref{log4}), we get
\begin{eqnarray}\label{log5}
\nonumber\int_{\mathbb R}\ln|t||f(t)\psi_\alpha(t-\beta)|^2dt+\int_{\mathbb R}\ln|b|\left|W^\mu_f(\xi,\beta,\alpha)\right|^2d\xi+\int_{\mathbb R}\ln|\xi|\left|W^\mu_f(\xi,\beta,\alpha)\right|^2d\xi&&\\\
\nonumber\ge\left[\frac{\Gamma'(1/4)}{\Gamma(1/4)}-\ln\pi\right]\int_{\mathbb R}|f(t)\psi_\alpha(t-\beta)|^2dt.\\\
\end{eqnarray}
On integrating both sides of (\ref{log5}) with respect to $\beta$ , we have
\begin{eqnarray}\label{log6}
\nonumber\int_{\mathbb R}\int_{\mathbb R}\ln|t||f(t)\psi_\alpha(t-\beta)|^2dtd\beta+\ln|b|\int_{\mathbb R}\int_{\mathbb R}\left|W^\mu_f(\xi,\beta,\alpha)\right|^2d\xi d\beta\\\
\nonumber+\int_{\mathbb R}\int_{\mathbb R}\ln|\xi|\left|W^\mu_f(\xi,\beta,\alpha)\right|^2d\xi d\beta&&\\\
\nonumber\ge\left[\frac{\Gamma'(1/4)}{\Gamma(1/4)}-\ln\pi\right]\int_{\mathbb R}\int_{\mathbb R}|f(t)\psi_\alpha(t-\beta)|^2dtd\beta.\\\
\end{eqnarray}
Now  using (\ref{con 2}) in (\ref{log6}), we get
\begin{eqnarray*}\label{log7}
\nonumber \|\psi\|^2\int_{\mathbb R}\ln|t||f(t)|^2dt+\int_{\mathbb R}\int_{\mathbb R}\ln|\xi|\left|W^\mu_f(\xi,\beta,\alpha)\right|^2d\xi d\beta&&\\\
\nonumber\ge\left[\frac{\Gamma'(1/4)}{\Gamma(1/4)}-\ln\pi\right]\|f\|^2\|\psi\|^2   -\ln|b|\|f\|^2\|\psi\|^2\\\
=\left[\frac{\Gamma'(1/4)}{\Gamma(1/4)}-\ln\pi-\ln|b|\right]\|f\|^2\|\psi\|^2
\end{eqnarray*}
which completes the proof.
\end{proof}

\begin{theorem}\label{hsb thm}
For $\psi,f\in L^2(\mathbb R)$ and $W^\mu_f(\xi,\alpha,\beta)$ be the QP-WPT of the signal $f,$ then the following inequality holds:
\begin{eqnarray}\label{hsb eqn}
\int_{\mathbb R}t^2|f(t)|^2dt\int_{\mathbb R}\int_{\mathbb R}\xi^2|W^\mu_f(\xi,\beta,\alpha)|^2d\xi d\beta \ge\left(\frac{1}{2|b|}\|f\|^2\|\psi\|\right)^2.
\end{eqnarray}

\end{theorem}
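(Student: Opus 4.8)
The plan is to reduce the statement to the classical Heisenberg inequality for the Fourier transform by exploiting the factorization recorded in Lemma \ref{rel ft}. Recall the standard inequality: for every $h\in L^2(\mathbb R)$ one has $\int_{\mathbb R}t^2|h(t)|^2dt\,\int_{\mathbb R}\xi^2|\mathcal F[h](\xi)|^2d\xi\ge\frac14\left(\int_{\mathbb R}|h(t)|^2dt\right)^2$, valid with the symmetric normalization (\ref{ft}). First I would fix $\beta$ and apply this to the auxiliary function $g(t)=e^{i(2at^2+2dt)}f(t)\psi_\alpha(t-\beta)$ from (\ref{g}), which lies in $L^2(\mathbb R)$ for each $\beta$. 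Since the quadratic phase has unit modulus, $|g(t)|^2=|f(t)\psi_\alpha(t-\beta)|^2$, so both the right-hand side and the time factor on the left are already expressed in terms of $f$ and $\psi_\alpha$.

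The next step converts the frequency integral $\int_{\mathbb R}\xi^2|\mathcal F[g](\xi)|^2d\xi$ into one involving $W^\mu_f$. By Lemma \ref{rel ft} the remaining exponential factor has modulus one, so $|\mathcal F[g](b\xi)|^2=|b|^{-1}|W^\mu_f(\xi,\beta,\alpha)|^2$; performing the substitution $\eta=b\xi$ then yields $\int_{\mathbb R}\eta^2|\mathcal F[g](\eta)|^2d\eta=b^2\int_{\mathbb R}\xi^2|W^\mu_f(\xi,\beta,\alpha)|^2d\xi$. Inserting these identities into the Heisenberg inequality for $g$ and taking square roots gives, for each fixed $\beta$,
\[
|b|\left(\int_{\mathbb R}t^2|f(t)\psi_\alpha(t-\beta)|^2dt\right)^{1/2}\left(\int_{\mathbb R}\xi^2|W^\mu_f(\xi,\beta,\alpha)|^2d\xi\right)^{1/2}\ge\frac12\int_{\mathbb R}|f(t)\psi_\alpha(t-\beta)|^2dt.
\]

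Finally I would integrate this inequality over $\beta\in\mathbb R$. The hard part is that the left-hand side is an integral over $\beta$ of a \emph{product} of two $\beta$-dependent factors, which does not factor directly; the key device is to apply the Cauchy--Schwarz inequality in $\beta$ to bound it above by the product $\left(\int_{\mathbb R}\int_{\mathbb R}t^2|f(t)\psi_\alpha(t-\beta)|^2dt\,d\beta\right)^{1/2}\left(\int_{\mathbb R}\int_{\mathbb R}\xi^2|W^\mu_f(\xi,\beta,\alpha)|^2d\xi\,d\beta\right)^{1/2}$. The two double integrals in the $f$-variable are then evaluated by Fubini together with the normalization $\int_{\mathbb R}|\psi_\alpha(t-\beta)|^2d\beta=\|\psi\|^2$ (equivalently, the energy identity (\ref{con 2}) used in the Moyal computation), giving $\|\psi\|^2\int_{\mathbb R}t^2|f(t)|^2dt$ and $\|\psi\|^2\|f\|^2$ respectively. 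Cancelling one factor of $\|\psi\|$ and squaring the resulting inequality produces exactly $\int_{\mathbb R}t^2|f(t)|^2dt\,\int_{\mathbb R}\int_{\mathbb R}\xi^2|W^\mu_f(\xi,\beta,\alpha)|^2d\xi\,d\beta\ge\left(\frac1{2|b|}\|f\|^2\|\psi\|\right)^2$, as claimed.
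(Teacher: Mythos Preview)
Your argument is correct and follows essentially the same architecture as the paper's proof: apply a Heisenberg inequality for each fixed $\beta$, take square roots, integrate in $\beta$, invoke Cauchy--Schwarz in $\beta$, and then collapse the $\beta$-integrals using $\int_{\mathbb R}|\psi_\alpha(t-\beta)|^2\,d\beta=\|\psi\|^2$ together with the Moyal identity (\ref{con 2}).

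The only noteworthy variation is the starting ingredient. The paper imports the Heisenberg inequality already formulated for the QPFT (Theorem~3.2 of \cite{fs2}) and then identifies $W^\mu_f(\cdot,\beta,\alpha)=\mathcal Q_\mu\big[f\,\overline{\psi^\mu_{\beta,\alpha}}\big]$, whereas you invoke the \emph{classical} Fourier Heisenberg inequality for $g$ and pass to $W^\mu_f$ through Lemma~\ref{rel ft} and the substitution $\eta=b\xi$. The two routes are equivalent (the QPFT Heisenberg is itself obtained from the classical one by the same chirp-and-scale manoeuvre), but your version has the mild advantage of being self-contained within the paper: it needs only the standard Heisenberg bound and Lemma~\ref{rel ft}, not an external QPFT uncertainty result.
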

\begin{proof}

The classical Heisenberg-Pauli-Weyl inequality in the QPFT domain(see \cite{fs2} Theorem 3.2 ) is given by
\begin{equation}\label{hsb1}
\int_{\mathbb R}t^2|f(t)|^2dt\int_{\mathbb R}\xi^2|\mathcal Q_\mu[f](\xi)|^2d\xi\ge\left(\frac{1}{2|b|}\int_{\mathbb R}|f(t)|^2dt\right)^2.
\end{equation}
Using the inverse transform for the QPFT into the LHS  and Plancherel identity for QPFT (\ref{}) into the RHS of the  (\ref{hsb1}),  we have
\begin{equation}\label{hsb2}
\int_{\mathbb R}t^2|Q^{-1}_\mu [Q_\mu [f](\xi)]|^2(t)dt\int_{\mathbb R}\xi^2|\mathcal Q_\mu[f](\xi)|^2d\xi\ge\left(\frac{1}{2|b|}\int_{\mathbb R}|Q_\mu [f](\xi)|^2d\xi\right)^2.
\end{equation}

For $f,\mathcal Q^\mu[f]\in L^2(\mathbb R)$  we have $W^\mu_f(\xi,\beta,\alpha)\in L^2(\mathbb R),$ so replacing $\mathcal Q^\mu[f]$ by $W^\mu_f(\xi,\beta,\alpha)$ in (\ref{hsb2}), we have
\begin{equation}\label{hsb3}
\int_{\mathbb R}t^2|Q^{-1}_\mu [W^\mu_f(\xi,\beta,\alpha)]|^2dt\int_{\mathbb R}\xi^2|W^\mu_f(\xi,\beta,\alpha)|^2d\xi\ge\left(\frac{1}{2|b|}\int_{\mathbb R}|W^\mu_f(\xi,\beta,\alpha)|^2d\xi\right)^2.
\end{equation}
Which implies
\begin{equation}\label{hsb4}
\left(\int_{\mathbb R}t^2|Q^{-1}_\mu [W^\mu_f(\xi,\beta,\alpha)]|^2dt\right)^{1/2}\left(\int_{\mathbb R}\xi^2|W^\mu_f(\xi,\beta,\alpha)|^2d\xi\right)^{1/2}\ge\frac{1}{2|b|}\int_{\mathbb R}|W^\mu_f(\xi,\beta,\alpha)|^2d\xi.
\end{equation}
Now integrating (\ref{hsb4}) both sides by $\beta$, we have
\begin{eqnarray}\label{hsb4}
\nonumber\int_{\mathbb R}\left(\int_{\mathbb R}t^2|Q^{-1}_\mu [W^\mu_f(\xi,\beta,\alpha)]|^2dt\right)^{1/2}\left(\int_{\mathbb R}\xi^2|W^\mu_f(\xi,\beta,\alpha)|^2d\xi\right)^{1/2}d\beta&&\\\
\nonumber\ge\frac{1}{2|b|}\int_{\mathbb R}\int_{\mathbb R}|W^\mu_f(\xi,\beta,\alpha)|^2d\xi d\beta&&\\\
\end{eqnarray}
Now applying Cauchy-Schwartz inequality, (\ref{hsb4}) yields
\begin{eqnarray}\label{hsb5}
\nonumber\left(\int_{\mathbb R}\int_{\mathbb R}t^2|Q^{-1}_\mu [W^\mu_f(\xi,\beta,\alpha)]|^2dtd\beta\right)^{1/2}\left(\int_{\mathbb R}\int_{\mathbb R}\xi^2|W^\mu_f(\xi,\beta,\alpha)|^2d\xi d\beta\right)^{1/2}&&\\\
\nonumber\ge\frac{1}{2|b|}\int_{\mathbb R}\int_{\mathbb R}|W^\mu_f(\xi,\beta,\alpha)|^2d\xi d\beta&&\\\
\end{eqnarray}
Now, using (\ref{con 2}) in (\ref{hsb5}), we obtain
\begin{eqnarray}\label{hsb6}
\nonumber\left(\int_{\mathbb R}\int_{\mathbb R}t^2|f(t)\psi(t-\beta)|^2dtd\beta\right)^{1/2}\left(\int_{\mathbb R}\int_{\mathbb R}\xi^2|W^\mu_f(\xi,\beta,\alpha)|^2d\xi d\beta\right)^{1/2}&&\\\
\nonumber\ge\frac{1}{2|b|}\|f\|^2\|\psi\|^2.&&\\\
\end{eqnarray}
On further simplifying (\ref{hsb6}), we get
\begin{eqnarray}\label{hsb7}
\nonumber\left(\int_{\mathbb R}t^2|f(t)|^2dt\right)^{1/2}\left(\int_{\mathbb R}\int_{\mathbb R}\xi^2|W^\mu_f(\xi,\beta,\alpha)|^2d\xi d\beta\right)^{1/2}&&\\\
\nonumber\ge\frac{1}{2|b|}\|f\|^2\|\psi\|.&&\\\
\end{eqnarray}
Which implies
\begin{eqnarray*}
\int_{\mathbb R}t^2|f(t)|^2dt\int_{\mathbb R}\int_{\mathbb R}\xi^2|W^\mu_f(\xi,\beta,\alpha)|^2d\xi d\beta \ge\left(\frac{1}{2|b|}\|f\|^2\|\psi\|\right)^2.
\end{eqnarray*}
Which completes the proof.
\end{proof}
\begin{remark}\label{remm}By varying the parameter $\mu=(a,b,c,d,e)$
the Heisenberg-type inequality (\ref{hsb eqn}), embodies certain existing Heisenberg-type inequalities and also give birth to some novel Heisenberg-type inequalities which are yet to be reported in the open literature which are listed below:

\begin{itemize}
\item For $\mu=(a/2b,-1/b,c/2b,0,0)$, the Heisenberg-type inequality (\ref{hsb eqn}) boils down to the novel Heisenberg inequality for linear canonical wave packet
transform(see Theorem 6.2 \cite{x36})\begin{equation*}
\int_{\mathbb R}t^2|f(t)|^2dt\int_{\mathbb R}\int_{\mathbb R}\xi^2|W^\mu_f(\xi,\beta,\alpha)|^2d\xi d\beta \ge\left(\frac{|b|}{2}\|f\|^2\|\psi\|\right)^2.
\end{equation*}
\item For $\mu=(\cot\theta,-\csc\theta,\cot\theta,0,0)$, $\theta\ne n\pi$,
 we can obtain the novel Heisenberg inequality for the fractional wave packet
transform\begin{equation*}
\int_{\mathbb R}t^2|f(t)|^2dt\int_{\mathbb R}\int_{\mathbb R}\xi^2|W^\mu_f(\xi,\beta,\alpha)|^2d\xi d\beta \ge\left(\frac{\sin\theta}{2}\|f\|^2\|\psi\|\right)^2.
\end{equation*}

\item For $\mu=(0,-1,1,0,0)$,
we can obtain the novel Heisenberg inequality for the  the classical
wave packet
transform\begin{equation*}
\int_{\mathbb R}t^2|f(t)|^2dt\int_{\mathbb R}\int_{\mathbb R}\xi^2|W^\mu_f(\xi,\beta,\alpha)|^2d\xi d\beta \ge\left(\frac{1}{2}\|f\|^2\|\psi\|\right)^2.
\end{equation*}
\end{itemize}

\end{remark}

\section{Conclusion} \label{sec 5}
Based on quadratic phase Fourier transform (QPFT) and the classical wave packet transform (WPT) theory, we in this paper propose a novel integral transform coined as quadratic phase wave packet transform (QP-WPT) which rectifies the
limitations of the WPT and QPFT. Overall, it not only combines the advantages of QPFT and WPT, but also preserves the properties of its conventional counterpart, and has better mathematical properties. Besides studying some notable inequalities 
and  the fundamental properties including the Moyal's formula, inversion formula and a
reproducing kernel, we also formulated several classes of uncertainty inequalities, such
as  Leib's uncertainty principle, the logarithmic uncertainty inequality and the Heisenberg inequality.

\end{document}